\DeclareMathOperator*{\E}{\mathbb{E}}
\let\Pr\relax
\DeclareMathOperator*{\Pr}{\mathbb{P}}
\DeclareMathOperator*{\rank}{rank}
\DeclareMathOperator*{\tr}{tr}
\let\ker\relax
\DeclareMathOperator*{\ker}{ker}
\newcommand{\wh}{\widehat}
\newcommand{\eqdef}{\mathbin{\stackrel{\rm def}{=}}}
\newcommand{\norm}[1]{\|#1\|}
\newcommand{\bs}[1]{\boldsymbol{#1}}
\newcommand{\bv}[1]{\mathbf{#1}}
\newcommand{\refine}{\texttt{RefineSparsifier}}
\newcommand{\maintain}{\texttt{MaintainSketches}}
\newcommand{\maintainma}{\texttt{MaintainMatrixSketches}}
\newcommand{\sample}{\texttt{RowSampleMatrix}}
\newcommand{\plog}{\mathop\mathrm{polylog}}
\newcommand{\poly}{\mathop\mathrm{poly}}
\newcommand{\R}{\mathbb{R}}
\newcommand{\mvar}[1]{\bv{#1}}
\newcommand{\ma}{\mvar{A}}
\newcommand{\mb}{\mvar{B}}
\newcommand{\mf}{\mvar{F}}
\newcommand{\mk}{\mvar{K}}
\newcommand{\ms}{\mvar{S}}
\newcommand{\mw}{\mvar{W}}
\newcommand{\mpi}{\mvar{\Pi}}
\newcommand{\dist}{\mathcal{D}}
\newtheorem*{rep@theorem}{\rep@title}
\newcommand{\newreptheorem}[2]{%
\newenvironment{rep#1}[1]{%
 \def\rep@title{#2 \ref{##1}}%
 \begin{rep@theorem}}%
 {\end{rep@theorem}}}
\newtheorem{theorem}{Theorem}
\newtheorem*{theorem*}{Theorem}
\newtheorem{lemma}{Lemma}
\newtheorem*{lemma*}{Lemma}
\title{Single Pass Spectral Sparsification in Dynamic Streams}
\author{\and\and
Michael Kapralov\\MIT\\ {kapralov@mit.edu}
\and
Yin Tat Lee\\MIT\\ {yintat@mit.edu}
\and
Cameron Musco\\MIT\\ {cnmusco@mit.edu}
\and\and
Christopher Musco\\MIT\\ {cpmusco@mit.edu}
\and
Aaron Sidford\\MIT\\ {sidford@mit.edu}
}
\date{}
\begin{document}
\maketitle
\begin{abstract}
We present the first single pass algorithm for computing spectral sparsifiers of graphs in the dynamic semi-streaming model. Given a single pass over a stream containing insertions and deletions of edges to a graph $G$, our algorithm maintains a randomized linear sketch of the incidence matrix of $G$ into dimension $O(\frac{1}{\epsilon^2}n\plog (n))$.
Using this sketch, at any point, the algorithm can output a $(1 \pm \epsilon)$ spectral sparsifier for $G$ with high probability. 

While $O(\frac{1}{\epsilon^2} n \plog(n))$ space algorithms are known for computing \emph{cut sparsifiers} in dynamic streams [AGM12b, GKP12] and spectral sparsifiers in \emph{insertion-only} streams [KL11], prior to our work, the best known single pass algorithm for maintaining spectral sparsifiers in dynamic streams required sketches  of dimension $\Omega(\frac{1}{\epsilon^2}n^{5/3})$ [AGM13].

To achieve our result, we show that, using a coarse sparsifier of  $G$ and a linear sketch of $G$'s incidence matrix, it is possible to sample edges by effective resistance,  obtaining a spectral sparsifier of arbitrary precision. Sampling from the sketch requires a novel application of $\ell_2/\ell_2$ sparse recovery, a natural extension of the $\ell_0$ methods used for cut sparsifiers in [AGM12b]. Recent work of [MP12] on row sampling for matrix approximation gives a recursive approach for obtaining the required coarse sparsifiers. 

Under certain restrictions, our approach also extends to the problem of maintaining a spectral approximation for a general matrix $A^\top A$ given a stream of updates to rows in $A$.

\end{abstract}
\thispagestyle{empty}

\clearpage
\setcounter{page}{1}

\section{Introduction}
\subsection{The Dynamic Semi-Streaming Model}
When processing massive graph datasets arising from social networks, web topologies, or interaction graphs, computation may be as limited by space as it is by runtime. To cope with this issue, one might hope to apply techniques
from the streaming model of computation, which restricts algorithms to few passes over the input and space polylogarithmic in the input size. Streaming algorithms
have been studied extensively in various application domains -- see \cite{muthukrishnan2005data} for an overview. However, the model has proven too restrictive for even the
simplest graph algorithms. For example, testing $s$-$t$ connectivity requires
$\Omega(n)$ space \cite{henz:lb}. 

The less restrictive semi-streaming model, in which the
algorithm is allowed $\tilde O(n)$ space, is more suited for graph
algorithms~\cite{feigenbaum2005graph}, and has received significant attention in recent years. In this model, a processor receives a stream of edges over a fixed set of $n$ nodes.  Ideally, the processor should only have to perform a single pass (or few passes) over the edge stream, and the processing time per edge, as well as the time required to output the final answer, should be small.

In the \emph{dynamic semi-streaming model}, the graph stream may include both edge insertions and deletions \cite{linearMeasurement}. This extension captures the fact that large graphs are unlikely to be static. Dynamic semi-streaming algorithms allow us to quickly process general updates in the form of edge insertions and deletions to maintain a small-space representation of the graph from which we can later compute a result. Sometimes the dynamic model is referred to as the \emph{insertion-deletion model}, in contrast to the more restrictive \emph{insertion-only model}.

Work on semi-streaming algorithms in both the dynamic and insertion-only settings is extensive. Researchers have tackled connectivity, bipartiteness, minimum spanning trees, maximal matchings, and spanners among other problems \cite{feigenbaum2005graph, epstein2011improved, elkin2011streaming, linearMeasurement,gssss}. In \cite{gregorSurvey},  McGregor surveys much of this progress and provides a more complete list of citations.

\subsection{Streaming Sparsification}
There has also been a focus on computing \emph{general purpose} graph compressions in the streaming setting. The goal is to find a subgraph of an input graph $G$ that has significantly fewer edges than $G$, but still maintains important properties of the graph. Hopefully, this \emph{sparsified graph} can be used to approximately answer a variety of questions about $G$ with reduced space and time complexity. Typically, the goal is to find a subgraph with just $O(n \log n)$ edges in comparison to the possible $O(n^2)$ edges in $G$. 

First introduced by Bencz{\'u}r and Karger \cite{benczur1996approximating}, a \emph{cut sparsifier} of a graph $G$ is a weighted subgraph with only $O(\frac{1}{\epsilon^2}n\log n)$ edges that preserves the total edge weight over every cut in $G$ to within a $(1 \pm \epsilon)$ multiplicative factor. Cut sparsifiers can be used to compute approximations for minimum cut, sparsest cut, maximum flow, and a variety of other problems over $G$. In \cite{spielmanTengSpectralSparse}, Spielman and Teng introduce the stronger \emph{spectral sparsifier}, a weighted subgraph whose Laplacian spectrally approximates the Laplacian of $G$. In addition to maintaining the cut approximation of Bencz{\'u}r and Karger, spectral sparsifiers can be used to approximately solve linear systems over the Laplacian of $G$, and to approximate effective resistances, spectral clusterings, random walk properties, and a variety of other computations. 

\if Both cut and spectral sparsifiers are powerful primitives for the semi-streaming model since they are rich compressions of dense graphs that can be stored in $O(n\plog (n))$ space. Streaming algorithms that allow us to extract a sparsifier from a graph immediately yield streaming algorithms for the many problems that sparsifiers allow us to approximate. 
\fi

The problem of computing graph sparsifiers in the semi-streaming model has received a lot of attention. Given just $\tilde{O}(n) = O(n \plog(n))$ space, the hope is to compute a sparsifier using barely more space than required to store the sparsifier, which will typically have $O(n\log n)$ edges.  Ahn and Guha give the first single pass, insertion-only algorithm for cut sparsifiers \cite{ahnStreamingSparsification}. Kelner and Levin give a single pass, insertion-only algorithm for spectral sparsifiers \cite{kelner2011spectral}. Both algorithms store a sparse graph: edges are added as they are streamed in and, when the graph grows too large, it is resparsified. The construction is very clean, but inherently does not extend to the dynamic model since, to handle edge deletions, we need more information than just a sparsifier itself. Edges eliminated to create an intermediate sparsifier may become critically important later if other edges are deleted, so we need to maintain information that allows recovery of such edges.

Ahn, Guha, and McGregor make a very important insight in \cite{linearMeasurement}, demonstrating the power of linear graph sketches in the dynamic model. They present the first dynamic algorithm for cut sparsifiers, which initially required $O(\frac{1}{\epsilon^2} n^{1+\gamma})$ space and $O(1/\gamma)$ passes over the graph stream. However, the result was later improved to a single pass and $O(\frac{1}{\epsilon^2}n\plog (n))$ space \cite{gssss,goel2012single}. Our algorithm extends the sketching and sampling approaches from these papers to the spectral problem.

In \cite{ahn2013spectral}, the authors show that linear graph sketches that capture connectivity information can be used to coarsely approximate spectral properties and they obtain spectral sparsifiers using $O(\frac{1}{\epsilon^2}n^{5/3} \plog(n))$ space in the dynamic setting. However, they also show that their coarse approximations are tight, so a new approach is required to obtain spectral sparsifiers using just $O(\frac{1}{\epsilon^2}n\plog (n))$ space. They conjecture that a dynamic algorithm for doing so exists. The development of such an algorithm is also posed as an open question in \cite{gregorSurvey}. A two-pass algorithm for constructing a spectral sparsifier in the dynamic streaming model using $O \left (\frac1{\epsilon^2}n^{1+o(1)} \right)$ space is presented in~\cite{KW14}. The approach is very different from ours: it leverages a reduction from spanner constructions to spectral sparsification presented in~\cite{KP12}. It is not known if this approach extends to a space efficient single pass algorithm.

\subsection{Our Contribution}

Our main result is an algorithm for maintaining a small graph sketch from which we can recover a spectral sparsifier. For simplicity, we present the algorithm in the case of unweighted graphs. However, in Section \ref{weighted}, we show that it is easily extended to weighted graphs. This model matches what is standard for dynamic cut sparsifiers  \cite{gssss,goel2012single}. 

\begin{theorem}[Main Result]
\label{main_sparsification_theorem} There exists an algorithm that, for any $\epsilon > 0$, processes a  list of edge insertions and deletions for an unweighted graph $G$ in a single pass and maintains a set of linear sketches of this input in $O\left(\frac1{\epsilon^2} n \plog(n)\right)$ space. From these sketches, it is possible to recover, with high probability, a weighted subgraph $H$ with $O(\frac1{\epsilon^2}n\log n)$ edges such that $H$ is a $(1 \pm \epsilon)$ spectral sparsifier of $G$. The algorithm 
recovers $H$ in $O\left(\frac1{\epsilon^2}n^2 \plog (n)\right)$ time.
\end{theorem}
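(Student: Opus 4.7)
The plan is to combine three ingredients: (i) spectral sparsification by effective resistance sampling in the style of Spielman--Srivastava, (ii) linear sketches of the vertex-edge incidence matrix that support $\ell_2/\ell_2$ sparse recovery after right-preconditioning, and (iii) a recursive coarse-to-fine refinement in the style of Miller--Peng (MP12) that bootstraps the effective-resistance estimates needed by the sampling step.

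First, let $B$ denote the signed vertex-edge incidence matrix of $G$, so that its Laplacian is $L = B^\top B$; edge insertions and deletions correspond to additions and subtractions of rows of $B$ and are therefore linear updates. During the stream I would maintain, independently and in parallel, $O(\log n)$ linear sketches of $B$, each of dimension $\tilde O(n/\epsilon^2)$ and supporting an $\ell_2/\ell_2$ heavy-hitter primitive on the rows of $B$ even after an arbitrary linear map is applied on the right. This sketching primitive---generalizing the $\ell_0$ sampling used for dynamic cut sparsifiers---is the main new technical object and would be constructed separately.

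The heart of the argument is a sampling lemma: given a coarse $\alpha$-spectral sparsifier $\tilde G$ of $G$ (so $\tilde L \preceq L \preceq \alpha \tilde L$, with $\tilde L$ stored explicitly) together with a linear sketch of $B$, one can produce an independent Bernoulli sample in which each edge $e$ is retained with probability approximately $p_e = \min\{1, c \alpha \log n / \epsilon^2 \cdot \tau_e\}$, where $\tau_e = b_e^\top L^{+} b_e$ is the effective resistance of $e$. The reason this works is that after applying $(\tilde L^+)^{1/2}$ on the right, the squared norm of the row of $B (\tilde L^+)^{1/2}$ corresponding to $e$ equals $b_e^\top \tilde L^+ b_e$, which lies within a factor $\alpha$ of $\tau_e$; rows of large preconditioned norm are thus precisely the edges that should be sampled, and a Spielman--Srivastava argument then guarantees that keeping each sampled edge with weight $1/p_e$ gives a $(1\pm\epsilon)$ spectral sparsifier with $\tilde O(n/\epsilon^2)$ edges. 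The principal obstacle---and where I expect most of the technical work to lie---is realizing this Bernoulli sampling from only a linear sketch: I would independently subsample rows at geometrically decreasing rates $2^{-j}$, feed each level into an $\ell_2/\ell_2$ sparse recovery sketch preconditioned by $(\tilde L^+)^{1/2}$, and use the heavy-row guarantee at the matching level to identify and decode exactly those edges whose preconditioned norm exceeds the chosen threshold.

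To break the circularity that the sampling lemma itself requires a coarse sparsifier of $G$, I would follow the recursive row-sampling framework of Miller--Peng. Maintain one sketch at each of $O(\log n)$ edge-subsampling scales, so the $i$th sketch encodes the subgraph obtained by keeping each edge independently with probability $2^{-i}$. At the coarsest scale the subgraph is empty and is trivially its own sparsifier; walking down scales one at a time, the sparsifier recovered at scale $i$ plays the role of the coarse sparsifier $\tilde G$ needed to invoke the sampling lemma at scale $i-1$. The spectral approximation factor degrades by only a constant per level, so $\alpha = \plog(n)$ at the bottom. The total space is $O(\log n) \cdot \tilde O(n/\epsilon^2) = O(\frac{1}{\epsilon^2} n \plog(n))$, the recovery time is dominated by $O(\log n)$ decoding passes each touching $\tilde O(n/\epsilon^2)$ edges at $\tilde O(n)$ work per edge (giving $O(\frac{1}{\epsilon^2} n^2 \plog(n))$ total), and a union bound over the $O(\log n)$ levels and the independent sampling trials delivers the high-probability guarantee claimed in the theorem.
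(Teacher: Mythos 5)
Your outline has the right ingredients for the refinement step (effective resistance sampling, $\ell_2$ heavy hitters on a preconditioned sketch, geometric subsampling levels for decoding), and that part matches the paper closely. The gap is in how you break the circularity. You propose a chain of uniformly subsampled subgraphs---the sketch at level $i$ encodes $G$ with each edge kept with probability $2^{-i}$---and claim that the sparsifier recovered at level $i$ serves as the coarse sparsifier for level $i-1$. This fails: a uniformly subsampled subgraph is not, in general, a spectral approximation of the graph one level down. Consider a graph containing a bridge edge (effective resistance $1$); subsampling at rate $1/2$ disconnects the graph with probability $1/2$, so neither $H_{i}$ nor any rescaling of it can $\alpha$-approximate $H_{i-1}$ for any finite $\alpha$. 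Uniform subsampling only preserves the spectrum when the maximum leverage score is small, which is precisely what one cannot assume in general. Your accounting also does not close: if the approximation quality degrades by a \emph{multiplicative} constant at each of $\log n$ levels, you get $\alpha = n^{\Theta(1)}$ at the bottom, not $\plog(n)$, which would blow up the sample size by a polynomial factor.

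The paper breaks the circularity in a fundamentally different way, via Miller--Peng's ``artificial bases.'' Rather than subsampling edges, it considers the fixed graph $G$ but a chain of regularized Laplacians $\bv{K}(\ell) = \bv{B}^\top\bv{B} + \gamma(\ell)\bv{I}$ with $\gamma(\ell)$ decreasing geometrically from $\lambda_{\max}$ down to below $\lambda_{\min}$. Consecutive $\bv{K}(\ell)$'s are within a factor of $2$ of each other \emph{deterministically}, and $\bv{K}(0)$ is $2$-approximated by the pure identity $\gamma(0)\bv{I}$, giving a free base case. Crucially, there is no error compounding: a $(1\pm\epsilon)$-approximation of $\bv{K}(\ell-1)$ is a constant-factor approximation of $\bv{K}(\ell)$, and the refinement step uses it only to compute leverage-score overestimates, producing a \emph{fresh} $(1\pm\epsilon)$-approximation of $\bv{K}(\ell)$ from a new sketch; the constant loss is absorbed into the sampling oversampling constant, not accumulated. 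The geometric edge-subsampling you describe does appear in the paper, but only \emph{inside} a single call to the refinement routine (to match each edge's sampling rate to its estimated effective resistance for heavy-hitter decoding), not as the coarse-to-fine recursion. You would need to replace your subsampling chain with the regularized-Laplacian chain (or an equivalent device) for the argument to go through.
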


It is well known that independently sampling edges from a graph $G$ according to their \emph{effective resistances} (i.e. leverage scores) gives a $(1\pm \epsilon)$ spectral sparsifier of $G$ with $O(\frac1{\epsilon^2}n \log n)$ edges \cite{graphSparsificationEffectiveResistance}. We can `refine' any coarse sparsifier for $G$ by using it to approximate effective resistances and then resample edges according to these approximate resistances. We show how to perform this refinement in the streaming setting, extending graph sketching techniques initially used for cut sparsifiers (\cite{gssss,goel2012single}) and introducing a new sampling technique based on an $\ell_2$ heavy hitters algorithm. Our refinement procedure is combined with a clever recursive method for obtaining a coarse sparsifier introduced by Miller and Peng in a recent paper on iterative row sampling for matrix approximation \cite{pengV1}.

\if 0
An interesting open question is whether or not cut and spectral sparsifiers can be computed in the turnstile model, which allows arbitrary increments and decrements to edge weights. This goal is complicated by the fact that such updates are nonlinear with respect to the edge-vertex incidence matrix of a graph, which contains \emph{square roots} of edge weights. Thus, they are unnatural for any sketching algorithm that, like our approach, compresses this incidence matrix (as opposed to working directly with the graph Laplacian).
\fi

The fact that our algorithm maintains a linear sketch of the streamed graph allows for the 
simple handling of edge deletions, which are treated as negative edge insertions. Additionally,  due to their linearity, our sketches are composable -- sketches of subgraphs can simply be added to produce a sketch of the full graph. Thus, our techniques are directly applicable in distributed settings where separate processors hold different subgraphs or each processes different edge substreams. 

Our application of linear sketching also gives a nice information theoretic result on graph compression. A spectral sparsifier is a powerful compression for a graph. It maintains, up to an $\epsilon$ factor, all spectral information about the Laplacian using just $O(\frac1{\epsilon^2}n\log n)$ space. At first glance, it may seem that such a compression requires careful analysis of the input graph to determine what information to keep and what to discard. However, the non-adaptive linear sketches used in our algorithm are completely \emph{oblivious}: at each edge insertion or deletion, we do not need to examine the current compression at all to make the appropriate update. As in sparse recovery or dimensionality reduction, we essentially just multiply the vertex edge incidence matrix by a random projection matrix, decreasing its height drastically in the process. Nevertheless, the oblivious compression obtained holds as much information as a spectral sparsifier -- in fact, we show how to extract a spectral sparsifier from it! Furthermore, the compression is only larger than $O(\frac1{\epsilon^2}n\log n)$ by log factors. Our result is the first of this kind in the spectral domain. The only other streaming algorithm for spectral sparsification that uses $O(\frac{1}{\epsilon^2}n\plog (n))$ space is distinctly non-oblivious \cite{kelner2011spectral} and oblivious subspace embeddings for compressing general matrices inherently require $O(n^2 \plog (n))$ space, even when the matrix is sparse (as in the case of an edge vertex incidence matrix) \cite{sarlos2006improved,clarkson2013low, meng2013, osnap}.

Finally, it can be noted that our proofs rely very little on the fact that our data stream represents a graph. We show that, with a few modifications, given a stream of row updates for a general structured matrix $A$, it is possible to maintain a $O(\frac1{\epsilon^2} n \plog(n))$ sized sketch from which a spectral approximation to $A^\top A$ can be recovered. By structured, we mean any matrix whose rows are selected from some fixed dictionary of size $\poly(n)$. Spectral graph sparsification is a special case of this problem: set $A$ to be the vertex edge incidence matrix of our graph. The dictionary is the set of all possible ${n\choose 2}$ edge rows that may appear in $A$ and $A^\top A$ is the graph Laplacian.  



\subsection{Road Map}
\begin{description}
\item[Section \ref{notation}] Lay out notation, build linear algebraic foundations for spectral sparsification, and present  lemmas for graph sampling and sparse recovery required by our algorithm.
\item[Section \ref{algorithm_overview}] Give an overview of our central algorithm, providing intuition and motivation.
\item[Section \ref{recursive_algorithm}] Present an algorithm of Miller and Peng (\cite{pengV1}) for building a chain of coarse sparsifiers and prove our main result, assuming a primitive for sampling edges by effective resistance in the streaming model.
\item[Section \ref{streaming_row_sampling}] Develop this sampling primitive, our main technical contribution.
\item[Section \ref{weighted}] Show how to extend the algorithm to weighted graphs.
\item[Section \ref{structured}] Show how to extend the algorithm to general structured matrices.
\item[Section \ref{pseudorandomness}] Remove our assumption of fully independent hash functions, using a pseudorandom number generator to achieve a final small space algorithm.
\end{description}

\section{Notation and Preliminaries}\label{notation}

\subsection{Graph Notation}
Let $\bv{B}_n \in \mathbb{R}^{{n \choose 2} \times n}$ be the vertex edge incidence matrix of the undirected, unweighted complete graph over $n$ vertices. $\bv{b}_e$, the row corresponding to edge $e=(u,v)$ contains a $1$ in column $u$, a $(-1)$ in column $v$, and $0$'s elsewhere. 

We write the vertex edge incidence matrix of an unweighted, undirected graph $G(V,E)$ as $\bv{B} = \bv{S}\bv{B}_n$ where $\bv{S}$ is an ${n \choose 2} \times {n \choose 2}$ diagonal matrix with ones at positions corresponding to edges contained in $G$ and zeros elsewhere.\footnote{Typically rows of $\bv{B}$ that are all $0$ are removed, but we find this formulation more convenient for our purposes.}
The $n \times n$ Laplacian matrix of $G$ is given by  $\bv{K} =  \bv{B}^\top \bv{B}$. 


\subsection{Spectral Sparsification}
For any matrix $\bv{B} \in \mathbb{R}^{m \times n}$, $\bv{\tilde K}$ is a $(1 \pm \epsilon)$ spectral sparsifier of $\bv{K} = \bv{B}^\top\bv{B}$ if, $\forall \bv{x} \in \mathbb{R}^n$, $(1-\epsilon) \bv{x}^\top \bv{K} \bv{x} \le \bv{x}^\top \bv{\tilde K} \bv{x} \le (1+\epsilon)\bv{x}^\top \bv{K} \bv{x}$. This condition can also be written as $(1-\epsilon)\bv{K} \preceq \bv{\tilde K} \preceq (1+\epsilon)\bv{K}$ where $\bv{C} \preceq \bv{D}$ indicates that $\bv{D} - \bv{C}$ is positive semidefinite. More succinctly, $\bv{\tilde K} \approx_\epsilon \bv{K}$ denotes the same condition. 
We also use the slightly weaker notation $(1-\epsilon)\bv{K} \preceq_{r} \bv{\tilde K} \preceq_r (1+\epsilon)\bv{K}$ to indicate that $(1-\epsilon) \bv{x}^\top \bv{K} \bv{x} \le \bv{x}^\top \bv{\tilde K} \bv{x} \le (1+\epsilon)\bv{x}^\top \bv{K} \bv{x}$ for all $\bv{x}$ in the \emph{row span} of $\bv{K}$. If $\bv{\tilde K}$ has the same row span as $\bv{K}$ this notation is equivalent to the initial notion of spectral sparsification.

While these definitions apply to general matrices, for our purposes, $\bv{B}$ is typically the vertex edge incidence matrix of a graph $G$ and $\bv{K}$ is a graph Laplacian. We do not always require our approximation $\bv{\tilde K}$ to be the graph Laplacian of a weighted subgraph, which is a standard assumption. For this reason, we avoid the standard $\bv{L}_G$ notation for the Laplacian.  For our purposes, $\bv{\tilde K}$ is always be a sparse symmetric diagonally dominant matrix with no more than $O(n\log{n})$ non-zero entries. In fact, it will always be the Laplacian of a sparse subgraph, but possibly with weight added to its diagonal entries. Furthermore, the final approximation returned by our streaming algorithm will be a bonafide spectral graph sparsifier -- i.e. the Laplacian matrix of a weighted subgraph of $G$.

\subsection{Leverage Scores and Row Sampling}
\label{Leverage Scores and Row Sampling}
For any $\bv{B} \in \mathbb{R}^{m \times n}$ with rank $r$, consider the reduced singular value decomposition, $\bv{B}=\bv{U}\bv{\Sigma}\bv{V}^{\top}$. $\bv{U} \in \mathbb{R}^{m\times r}$ and $\bv{V} \in \mathbb{R}^{n \times r}$ have orthonormal columns and $\bv{\Sigma} \in \mathbb{R}^{r \times r}$ is diagonal and contains the non-zero singular values of $\bv{B}$. Then, $\bv{B}^\top\bv{B} = \bv{V}\bv{\Sigma}\bv{U}^\top\bv{U}\bv{\Sigma}\bv{V}^{\top} = \bv{V}\bv{\Sigma}^2\bv{V}^{\top}$. We let $\bv{K}^+$ denote the Moore-Penrose pseudoinverse of $\bv{K} = \bv{B}^\top \bv{B}$:
\begin{align*}
\bv{K}^+ = \bv{V}(\bv{\Sigma}^{-1})^2\bv{V}^{\top}.
\end{align*}
 The leverage score, $\tau_i$, for a row $\bv{b}_i$ in $\bv{B}$ is defined as
\begin{align*}
\tau_i &\eqdef \bv{b}_i^\top \bv{K}^+ \bv{b}_i = \bv{u}_i^\top \bv{\Sigma V^\top} (\bv{V} \bv{\Sigma}^{-2} \bv{V^\top) V \Sigma} \bv{u}_i = \norm{\bv{u}_i}_2^2 \le 1.
\end{align*}
The last inequality follows from the fact that every row in a matrix with orthonormal columns has norm less than 1. In a graph, $\tau_i = r_i w_i$, where $r_i$ is the \emph{effective resistance} of edge $i$ and $w_i$ is the edge's weight. Furthermore,
\begin{align*}
\sum_{i=1}^m \tau_i = \tr(\bv{B}\bv{K^+}\bv{B^\top}) = \norm{\bv{U}}_F^2 = r = \rank(\bv{B}).
\end{align*}


It is well known that by sampling the rows of $\bv{B}$ according to their leverage scores it is possible to obtain a matrix $\bv{\tilde B}$ such that $\bv{\tilde K} = \bv{\tilde B}^\top \bv{\tilde B} \approx_\epsilon \bv{K}$ with high probability. Furthermore, if obtaining exact leverage scores is computationally difficult, it suffices to sample by upper bounds on the scores. Typically, rows are sampled \emph{with replacement} with probability proportional to their leverage score \cite{graphSparsificationEffectiveResistance,pengV2}. We require an alternative procedure for sampling edges \emph{independently}.

%

\begin{lemma}[Spectral Approximation via Leverage Score Sampling]\label{sparsifier_sampling}
Let $\bs{\tilde \tau}$ be a vector of leverage score overestimates for $\bv{B}$'s rows such that $\bv{\tilde \tau}_i \ge \bv{\tau}_i$ for all $i \in [m]$. For $0 < \epsilon < 1$ and fixed constant $c$, define the sampling probability for row $\bv{b}_i$ to be $p_i = \min \{1,c \log n \epsilon^{-2} \tilde{\tau}_i\}$. Define a diagonal sampling matrix $\bv{W}$ with $\bv{W}(i,i) = \frac{1}{p_i}$ with probability $p_i$ and $\bv{W}(i,i) = 0$ otherwise. With high probability,
\begin{align*}
\bv{\tilde K} = \bv{B}^\top \bv{W} \bv{B} \approx_\epsilon \bv{K}.
\end{align*}
Furthermore, $\bv{W}$ has $O(\norm{\bs{\tilde \tau}}_1\log n \epsilon^{-2})$ non-zeros with high probability. \end{lemma}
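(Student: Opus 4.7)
The plan is to express $\bv{\tilde K} - \bv{K}$ as a sum of independent, mean-zero random matrices and apply a matrix Chernoff inequality (e.g.\ Tropp's matrix Bernstein/Chernoff bound). Write
\[
\bv{\tilde K} = \sum_{i=1}^m \bv{W}(i,i)\, \bv{b}_i \bv{b}_i^\top, \qquad \bv{K} = \sum_{i=1}^m \bv{b}_i \bv{b}_i^\top,
\]
and note that $\E[\bv{W}(i,i)] = 1$ in both the Bernoulli-sampled case ($p_i < 1$) and the deterministic case ($p_i = 1$), so $\E[\bv{\tilde K}] = \bv{K}$. Rows with $p_i = 1$ contribute no randomness, so I set them aside and focus on the rest.

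To put the sum in a form suitable for matrix Chernoff, I would conjugate by $(\bv{K}^+)^{1/2}$ and work in the row span of $\bv{B}$. Define $\bv{Y}_i \eqdef \bv{W}(i,i)\, (\bv{K}^+)^{1/2} \bv{b}_i \bv{b}_i^\top (\bv{K}^+)^{1/2}$, so that $\sum_i \E[\bv{Y}_i] = (\bv{K}^+)^{1/2}\bv{K}(\bv{K}^+)^{1/2} = \bs{\Pi}$, the orthogonal projector onto the row span of $\bv{B}$. Each $\bv{Y}_i$ is a rank-one PSD matrix of operator norm at most
\[
\frac{1}{p_i}\, \bv{b}_i^\top \bv{K}^+ \bv{b}_i \;=\; \frac{\tau_i}{p_i} \;\le\; \frac{\tilde{\tau}_i}{p_i} \;\le\; \frac{\epsilon^2}{c\log n},
\]
using $p_i \ge c\log n\,\epsilon^{-2}\,\tilde{\tau}_i$ whenever $p_i<1$ (and $\bv{Y}_i$ is deterministic otherwise, so it contributes nothing to the variance term). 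Applying matrix Chernoff to $\sum_i \bv{Y}_i$ then gives
\[
\Pr\!\left[\,\bigl\|\sum_i \bv{Y}_i - \bs{\Pi}\bigr\|_2 > \epsilon\,\right] \;\le\; 2n\cdot \exp\!\bigl(-\Omega(c\log n)\bigr),
\]
which is $\mathrm{poly}(n^{-1})$ for $c$ a sufficiently large constant. Conjugating back by $\bv{K}^{1/2}$ converts this into $(1-\epsilon)\bv{K} \preceq_r \bv{\tilde K} \preceq_r (1+\epsilon)\bv{K}$; since every sampled row $\bv{b}_i$ lies in the row span of $\bv{B}$, we have $\ker(\bv{\tilde K})\supseteq\ker(\bv{K})$, which upgrades $\preceq_r$ to $\preceq$ and gives $\bv{\tilde K}\approx_\epsilon \bv{K}$.

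The sparsity claim is a separate, purely scalar calculation: the number of nonzero diagonal entries of $\bv{W}$ is $\sum_i \mathbb{1}[\bv{W}(i,i)\neq 0]$, a sum of independent Bernoullis with expectation $\sum_i p_i \le c\log n\,\epsilon^{-2}\,\|\bs{\tilde\tau}\|_1$, so a standard Chernoff bound yields $O(\|\bs{\tilde\tau}\|_1 \log n\,\epsilon^{-2})$ nonzeros with high probability. The main technical care is in the spectral part: one must properly handle the kernel of $\bv{K}$ by working inside its row span (motivating the $\preceq_r$ notation introduced earlier) and isolate the $p_i=1$ rows so that the variance/norm bound needed by matrix Chernoff depends only on the randomized terms. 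Everything else is a routine instantiation of the now-standard leverage-score-sampling analysis.
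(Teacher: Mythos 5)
Your argument is correct, and it is essentially the same proof that the paper delegates to the cited reference: the paper does not prove Lemma~\ref{sparsifier_sampling} itself but instead points to \cite{uniformSampling} (Lemma~4), which, as noted in the text, proves it via ``a matrix concentration result from \cite{tropp2012user}.'' Your route---conjugating by $(\bv{K}^+)^{1/2}$ so the target becomes the orthogonal projector $\bs{\Pi}$ onto the row span, bounding each rank-one random term $\bv{Y}_i$ by $\tilde\tau_i/p_i \le \epsilon^2/(c\log n)$, isolating the deterministic $p_i=1$ terms so they contribute no variance, invoking a matrix Chernoff/Bernstein bound, and then conjugating back by $\bv{K}^{1/2}$ and observing $\ker(\bv{\tilde K})\supseteq\ker(\bv{K})$ to upgrade $\preceq_r$ to $\preceq$---is precisely the standard instantiation of that cited argument, and the scalar Chernoff bound on $\sum_i p_i$ for the sparsity claim is likewise routine.
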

A proof of Lemma \ref{sparsifier_sampling} based on a matrix concentration result from \cite{tropp2012user} can be found in \cite{uniformSampling} (Lemma 4). Note that, when applied to the vertex edge incidence matrix of a graph, leverage score sampling is equivalent to effective resistance sampling, as introduced in \cite{graphSparsificationEffectiveResistance} for graph sparsification.

\subsection{Sparse Recovery}
\label{sparse recovery}
While we cannot sample by leverage score directly in the streaming model, we can use a sparse recovery primitive to sample edges from a set of linear sketches. We use an $\ell_2$ heavy hitters algorithm that, for any vector $\bv{x}$, lets us recover from a small linear sketch $\bs{\Phi} \bv{x}$, the index $i$ and the approximate value of $\bv{x}_i$ for all $i$ such that $\bv{x}_i > \eta ||\bv{x}||_2$.

\begin{lemma}[$\ell_2$ Heavy Hitters]
\label{sparse_recovery_primitive}
For any $\eta >0$, there is a decoding algorithm $D$ and a distribution on matrices $\bv{\Phi}$ in $\R^{O(\eta^{-2} \plog( N)) \times N}$ such that, for any $\bv{x}\in \R^N$, given $\bs{\Phi} \bv{x}$, the algorithm $D$ returns a vector $\bv{w}$ such that
$\bv{w}$ has $O(\eta^{-2}\plog( N))$ non-zeros and satisfies
$$ ||\bv{x}-\bv{w}||_\infty \leq \eta ||\bv{x}||_2.$$
with probability $1-N^{-c}$ over the choice of $\bs{\Phi}$.
The sketch $\bv{\Phi} \bv{x}$ can be maintained and decoded in $O(\eta^{-2}\plog( N))$ space. 
\end{lemma}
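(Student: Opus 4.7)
The plan is to construct $\bs{\Phi}$ as a standard CountSketch matrix with carefully chosen parameters, augmented with a dyadic tree to locate heavy coordinates. Concretely, I would set $B = C\eta^{-2}$ for a sufficiently large constant $C$ and pick $r = \Theta(\log N)$ pairwise-independent hash functions $h_1,\ldots,h_r \colon [N] \to [B]$ together with four-wise independent sign functions $\sigma_1,\ldots,\sigma_r \colon [N] \to \{-1,+1\}$. Then row $(j,k)$ of $\bs{\Phi}$ is $\sum_{i\colon h_j(i)=k}\sigma_j(i)\bv{e}_i^{\top}$, so that the sketch $\bs{\Phi}\bv{x}$ is an $r\times B$ table whose $(j,k)$ entry equals $\sum_{i\colon h_j(i)=k}\sigma_j(i)\bv{x}_i$. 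This uses $O(\eta^{-2}\log N)$ words of storage, and each coordinate update touches $r$ cells.

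For estimation, for each index $i$ define $\hat{\bv{x}}_i = \median_{j\in[r]}\sigma_j(i)\cdot(\bs{\Phi}\bv{x})_{j,h_j(i)}$. A standard CountSketch variance calculation gives $\E[(\sigma_j(i)(\bs{\Phi}\bv{x})_{j,h_j(i)}-\bv{x}_i)^2] \le \|\bv{x}\|_2^2/B$, so by Chebyshev each individual estimator is within $\eta\|\bv{x}\|_2$ of $\bv{x}_i$ with probability at least $3/4$; the median over $r=\Theta(\log N)$ repetitions then boosts this to $1 - N^{-(c+1)}$, and a union bound over all $N$ coordinates yields $\|\hat{\bv{x}}-\bv{x}\|_\infty \le \eta\|\bv{x}\|_2$ with probability at least $1-N^{-c}$.

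The remaining difficulty is that simply writing down $\hat{\bv{x}}\in\R^N$ would not give the required $O(\eta^{-2}\plog N)$-sparse output $\bv{w}$, so we need a way to extract a short candidate set of likely heavy coordinates directly from the sketch. I would handle this with the standard dyadic identification trick: maintain, in parallel, $O(\log N)$ independent CountSketch tables, where the $\ell$-th table hashes index $i$ according to the length-$\ell$ prefix of its binary expansion. One walks the implicit binary tree from the root, at each level keeping only those subtrees whose estimated bucket $\ell_2$ mass exceeds $\tfrac12\eta\|\bv{x}\|_2$. The total number of surviving nodes at any level is $O(\eta^{-2})$, so after $\log_2 N$ levels we obtain an explicit candidate set $S$ of $O(\eta^{-2})$ indices; we return $\bv{w}$ defined by $\bv{w}_i=\hat{\bv{x}}_i$ for $i\in S$ and $\bv{w}_i=0$ otherwise. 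Any coordinate with $|\bv{x}_i|>\eta\|\bv{x}\|_2$ contributes enough mass to survive every level with probability $1-N^{-\Omega(1)}$, so the $\ell_\infty$ bound is preserved on the support of the output, and the total space overhead is only a $\plog(N)$ factor.

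The main obstacle is calibrating $r$, $B$, and the independence of the hash families simultaneously for (i) the Chebyshev-based median concentration for $\hat{\bv{x}}_i$, (ii) the union bound over all $N$ coordinates to reach failure probability $N^{-c}$, and (iii) the level-by-level survival of every genuine heavy hitter in the dyadic search. I would spend most of the proof verifying these three bounds with consistent constants and then cite the classical analysis of CountSketch (Charikar--Chen--Farach-Colton) for the per-coordinate $\ell_\infty/\ell_2$ tail bound, so that the resulting decoding algorithm $D$ meets the stated space and sparsity guarantees.
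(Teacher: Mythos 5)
Your proposal takes a genuinely different route from the paper. The paper reduces $\ell_2$ heavy hitters to the $\ell_2/\ell_2$ sparse-recovery guarantee of Gilbert et al.\ (cited as Theorem~\ref{thm:l2l2} in the appendix): it isolates coordinates by a pairwise-independent hash $h:[N]\to[16/\eta^2]$, applies $1$-sparse $\ell_2/\ell_2$ recovery in each bucket, and then repeats $O(\log N)$ times and takes a median. You instead build the estimator directly from a median-of-CountSketch argument. That core is correct: with $B=\Theta(\eta^{-2})$ buckets, $4$-wise independent signs, Chebyshev, and a median over $\Theta(\log N)$ repetitions, one indeed gets $|\hat{\bv{x}}_i-\bv{x}_i|\le\eta\|\bv{x}\|_2$ for all $i$ with probability $1-N^{-c}$. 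So your approach is self-contained and avoids the black-box dependence on a sparse-recovery primitive, which the paper's proof takes as given.

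However, there is a gap in your sparsification step. You propose to identify candidates by hashing each level-$\ell$ prefix into a CountSketch table and thresholding the ``estimated bucket $\ell_2$ mass.'' A CountSketch table keyed by prefix estimates the \emph{signed sum} $\sum_{i:\,\mathrm{prefix}_\ell(i)=p}\bv{x}_i$, not the $\ell_2$ mass $\bigl(\sum_{i:\,\mathrm{prefix}_\ell(i)=p}\bv{x}_i^2\bigr)^{1/2}$. Because entries may have opposite signs, a coordinate with $|\bv{x}_i|>\eta\|\bv{x}\|_2$ can be completely cancelled within its own subtree, so the claim that every genuine heavy coordinate ``contributes enough mass to survive every level'' is false as stated; this is precisely why dyadic Count-Min works for nonnegative/$\ell_1$ heavy hitters but the dyadic decomposition for $\ell_2$ requires an $\ell_2$-mass estimator (e.g., an AMS-type sketch per level-$\ell$ bucket), not a CountSketch of prefix sums. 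The easiest repair, sufficient here since the lemma only bounds decoding \emph{space} and not decoding time, is to drop the tree entirely: run an auxiliary AMS sketch to estimate $\|\bv{x}\|_2$ up to a constant factor, then iterate over all $i\in[N]$, compute $\hat{\bv{x}}_i$, and output $\bv{w}_i=\hat{\bv{x}}_i$ only when $|\hat{\bv{x}}_i|$ exceeds (say) $\eta\|\bv{x}\|_2/2$. Tightening the per-coordinate error to $\eta\|\bv{x}\|_2/4$ guarantees that any coordinate passing the threshold has $|\bv{x}_i|>\eta\|\bv{x}\|_2/4$, of which there are at most $16/\eta^2$, so $\bv{w}$ has the required sparsity, and any coordinate failing the threshold has $|\bv{x}_i|\le\eta\|\bv{x}\|_2$, so setting $\bv{w}_i=0$ preserves $\|\bv{x}-\bv{w}\|_\infty\le\eta\|\bv{x}\|_2$. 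If you do want a fast dyadic decoder, replace the per-level CountSketch tables with per-level AMS estimates of subtree $\ell_2$ mass and rerun your survival argument against that estimator.
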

This procedure allows us to distinguish from a sketch whether or not a specified entry in $\bv{x}$ is equal to 0 or has value $> 2\eta\|\bv{x}\|_2$. 
We give a proof of Lemma \ref{sparse_recovery_primitive} in Appendix \ref{sparse_recovery_appendix}

\section{Algorithm Overview}\label{algorithm_overview}
Before formally presenting a proof of our main result, Theorem \ref{main_sparsification_theorem}, we give an informal overview of the algorithm to provide intuition. 

\subsection{Effective Resistances}
As explained in Section \ref{Leverage Scores and Row Sampling}, spectral sparsifiers can be generated by sampling edges, i.e. rows of the vertex edge incidence matrix. For an unweighted graph $G$, each edge $e$ is sampled independently with probability proportional to its leverage score, $\tau_e$. After sampling, we reweight and combine any sampled edges. The result is a subgraph of $G$ containing, with high probability, $O(\frac1{\epsilon^2}n\log n)$ edges and spectrally approximating $G$.

If we view $G$ as an electrical circuit, with each edge representing a unit resistor, the leverage score of an edge $e=(i,j)$ is equivalent  to its effective resistance. This value can be computed by forcing $1$ unit of current out of vertex $i$ and $1$ unit of current into vertex $j$. The resulting voltage difference between the two vertices is the effective resistance of $e$. Qualitatively, if the voltage drop is low, there are many low resistance (i.e. short) paths between $i$ and $j$. Thus, maintaining a direct connection between these vertices is less critical in approximating $G$, so $e$ is less likely to be sampled.
Effective resistance can be computed as:
\begin{align*}
\tau_e = \bv{b}_e^\top \bv{K}^+\bv{b}_e
\end{align*}

Note that $\tau_e$ can be computed for any pair of vertices, $(i,j)$, or in other words, for any possible edge in $G$. We can evaluate  $\bv{b}_e^\top \bv{K}^+\bv{b}_e$ even if $e$ is not present in the graph. Thus, we can reframe our sampling procedure. Instead of just sampling edges actually in $G$, imagine we run a sampling procedure for \emph{every possible} $e$. When recombining edges to form a spectral sparsifier, we separately check whether each edge $e$ is in $G$ and only insert into the sparsifier if it is. 
\subsection{Sampling in the Streaming Model}
With this procedure in mind, a sampling method that works in the streaming setting requires two components. First, we need to obtain a constant factor approximation to $\tau_e$ for any $e$. Known sampling algorithms, including our Lemma \ref{sparsifier_sampling}, are robust to this level of estimation. Second, we need to compress our edge insertions and deletions in such a way that, during post-processing of our sketch, we can determine whether or not a sampled edge $e$ actually exists in $G$. 

The first requirement is achieved through the recursive procedure given in \cite{pengV1}. We will give the overview shortly but, for now, assume that we have access to a coarse sparsifier, $\bv{\tilde K} \approx_{1/2} \bv{K}$. Computing $\bv{b}_e^\top \bv{\tilde K}^+\bv{b}_e$ gives a 2 factor multiplicative  approximation of $\tau_e$ for each $e$. Furthermore, as long as $\bv{\tilde K}$ has sparsity $O(n\log n)$, the computation can be done in small space using an iterative system solver (e.g. conjugate gradient) or a nearly linear time solver for symmetric diagonally dominant matrices (e.g. \cite{koutis2011nearly}).

Solving part two (determining which edges are actually in $G$) is a bit more involved.
As a first step, consider writing
\begin{align*}
\tau_e = \bv{b}_e^\top \bv{K}^+\bv{K}\bv{K}^+\bv{b}_e = \norm{\bv{B}\bv{K}^+\bv{b}_e}_2^2 = \norm{\bv{S}\bv{B}_n\bv{K}^+\bv{b}_e}_2^2.
\end{align*}
Referring to Section \ref{notation}, recall that $\bv{B} = \bv{S}\bv{B}_n$ is exactly the same as a standard vertex edge incidence matrix except that rows in $\bv{B}_n$ corresponding to nonexistent edges are zeroed out instead of removed. Denote $\bv{x}_e = \bv{S}\bv{B}_n\bv{K}^+\bv{b}_e$. Each nonzero entry in $\bv{x}_e$ contains the voltage difference across some edge (resistor) in $G$ when one unit of current is forced from $i$ to $j$. 

When $e$ is not in $G$, then the $e^\text{th}$ entry of $\bv{x}_e$,  $\bv{x}_e(e)$ is $0$. If $e$ is in $G$, $\bv{x}_e(e) = \tau_e$. Furthermore, $\norm{\bv{x}_e}_2^2 = \tau_e$. 
Given a space allowance of $\plog(n)$, the sparse recovery algorithm from Lemma \ref{sparse_recovery_primitive} allows us to recover an entry if it accounts for at least an $\Omega(1/\plog(n))$ fraction of the total $\ell_2$ norm. Currently, $\bv{x}_e(e)/\norm{\bv{x}_e}_2 = \sqrt{\tau_e}$, which could be much smaller than $O(1/\plog(n))$. However, suppose we had a sketch of $\bv{x}_e$ with all but a $\tau_e$ fraction of edges randomly sampled out. Then, we would expect $\norm{\bv{x}_e}_2^2 \approx \tau_e^2$ and
thus, $\bv{x}_e(e)/\norm{\bv{x}_e}_2 = O(1) = \Omega(1/\plog(n))$ and sparse recovery would successfully indicate whether or not $e\in G$. What's more, randomly zeroing out entries of $\bv{x}_e$ can serve as our main sampling routine for edge $e$. This process will set $\bv{x}_e(e)=0$ with probability $(1-\tau_e)$, exactly what we wanted to sample by in the first place!

However, how do we go about sketching every appropriately sampled $\bv{x}_e$? Well, consider subsampling our graph at geometrically decreasing rates, $1/2^s$ for $s \in \{0,1,...O(\log n) \}$. Maintain linear sketches $\bv{\Pi}_1 \bv{B}_1,...\bv{\Pi}_{O(\log n)}\bv{B}_{O(\log n)}$ of the vertex edge incidence matrix for every subsampled graph using the $\ell_2$ sparse recovery sketch distribution from Lemma \ref{sparse_recovery_primitive}.
When asked to output a spectral sparsifier, for every possible edge $e$, we compute using $\bv{\tilde K}$ a rate $1/2^s$ that approximates $\tau_e$. 

Since each sketch is linear, 
we can just multiply $\bv{\Pi}_{1/2^s} \bv{B}_{1/2^s}$ on the right by $\bv{\tilde K}^+\bv{b}_e$ to compute
\begin{align*}
\bv{\Pi}_{1/2^s} \bv{B}_{1/2^s}\bv{\tilde K}^+\bv{b}_e \approx \bv{\Pi}_{1/2^s} \bv{x}_e^{1/2^s},
\end{align*}
where $\bv{x}_e^{1/2^s}(e)$ is $\bv{x}_e$ sampled at rate $1/2^s \approx \tau_e$. 
Then, as explained, we can use our sparse recovery routine to determine whether or not $e$ is present. If it is, we have obtained a sample for our spectral sparsifier!

\subsection{A Chain of Coarse Sparsifiers}
The final required component is access to some sparse $\bv{\tilde K} \approx_{1/2} \bv{K}$. This coarse sparsifier is obtained recursively by constructing a chain of matrices, $\begin{bmatrix}\bv{K}(0), \bv{K}(1), \ldots, \bv{K}(d), \bv{K}\end{bmatrix}$ each weakly approximating the next. Specifically, imagine producing $\bv{K}(d)$ by adding a fairly light identity matrix to $\bv{K}$. As long as the identity's weight is small compared to $\bv{K}$'s spectrum, $\bv{K}(d)$ approximates $\bv{K}$. Add even more weight to the diagonal to form $\bv{K}(d-1)$. Again, as long as the increase is small, $\bv{K}(d-1)$ approximates $\bv{K}(d)$. We continue down the chain until $\bv{K}(0)$, which will actually have a heavy diagonal after all the incremental increases. Thus, $\bv{K}(0)$ can be approximated by an appropriately scaled identity matrix, which is clearly sparse. Miller and Peng show that parameters can be chosen such that $d = O(\log n)$ \cite{pengV1}.

Putting everything together, we maintain $O(\log n)$ sketches for $\begin{bmatrix}\bv{K}(0), \bv{K}(1), \ldots, \bv{K}(d), \bv{K}\end{bmatrix}$. We first use a weighted identity matrix as a coarse approximation for $\bv{K}(0)$, which allows us to recover a good approximation to $\bv{K}(0)$ from our sketch. This approximation will in turn be a coarse approximation for $\bv{K}(1)$, so we can recover a good sparsifier of $\bv{K}(1)$. Continuing up the chain, we eventually recover a good sparsifier for our final matrix, $\bv{K}$. 

\section{Recursive Sparsifier Construction}\label{recursive_algorithm}
In this section, we formalize a recursive procedure for obtaining a chain of coarse sparsifiers that was introduced by Miller and Peng  -- ``Introduction and Removal of Artificial Bases'' \cite{pengV1}. We prove Theorem \ref{main_sparsification_theorem} by combining this technique with the sampling algorithm developed in Section \ref{streaming_row_sampling}.


\begin{theorem}[Recursive Sparsification -- \cite{pengV1}, Section 4]
\label{miller_peng_chain}
Consider any PSD matrix $\bv{K}$ with maximum eigenvalue bounded from above by $\lambda_{u}$ and minimum non-zero eigenvalue bounded from below by $\lambda_{l}$. Let $d = \lceil \log_2 (\lambda_{u}/\lambda_{l})\rceil$.  For $\ell \in \{0,1, 2, ... , d\}$, define
\begin{align*}
\gamma(\ell) = \frac{\lambda_{u}}{2^\ell}.
\end{align*}
So, $\gamma(0) = \lambda_u$ and $\gamma(d) \leq \lambda_{l}$. Then the chain of PSD matrices, $\begin{bmatrix}\bv{K}(0), \bv{K}(1), \ldots, \bv{K}(d)\end{bmatrix}$ with
\begin{align*}
\bv{K}(\ell) = \bv{K} + \gamma(\ell)\bv{I}_{n\times n},
\end{align*}
satisfies the following relations:
\begin{enumerate}
  \item $\bv{K} \preceq_r \bv{K}(d) \preceq_r 2\bv{K}$,
  \item $\bv{K}(\ell) \preceq \bv{K}(\ell-1) \preceq 2\bv{K}(\ell)$ for all $\ell \in \{1,\ldots, d\}$,
  \item $\bv{K}(0) \preceq 2\gamma(0)\bv{I} \preceq 2\bv{K}(0)$.
\end{enumerate}
When $\bv{K}$ is the Laplacian of an unweighted graph, its largest eigenvalue $\lambda_{max} < 2n$ and its smallest non-zero eigenvalue $\lambda_{min} > 8/n^2$. Thus the length of our chain, $d = \lceil \log_2 \lambda_{u}/\lambda_{l}\rceil$, is $O(\log n)$.
\end{theorem}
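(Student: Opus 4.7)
The plan is to verify each of the three $\preceq$ relations by direct algebraic manipulation of $\bv{K}(\ell) = \bv{K} + \gamma(\ell)\bv{I}$, using only that $\gamma(\ell) = \lambda_u/2^\ell$, that $\gamma(\ell-1) = 2\gamma(\ell)$, and the definitions of $\lambda_u,\lambda_l$ as spectral extremes of $\bv{K}$. Then I would handle the Laplacian eigenvalue bounds separately.

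For Part 2, I would just compute differences: $\bv{K}(\ell-1)-\bv{K}(\ell) = \gamma(\ell)\bv{I} \succeq 0$ gives the left inequality, and $2\bv{K}(\ell)-\bv{K}(\ell-1) = \bv{K} + (2\gamma(\ell)-\gamma(\ell-1))\bv{I} = \bv{K} \succeq 0$ gives the right. Part 3 is equally direct: $2\gamma(0)\bv{I}-\bv{K}(0) = \lambda_u\bv{I}-\bv{K} \succeq 0$ by definition of $\lambda_u$, and $2\bv{K}(0)-2\gamma(0)\bv{I} = 2\bv{K} \succeq 0$. The only mildly delicate step is Part 1: I would note that $\bv{K}(d) - \bv{K} = \gamma(d)\bv{I} \succeq 0$ so $\bv{K}\preceq\bv{K}(d)$ holds globally (and hence restricted to the row span), and for the upper bound I would restrict to $\bv{x}$ in the row span of $\bv{K}$, where $\bv{x}^\top\bv{K}\bv{x} \geq \lambda_l\|\bv{x}\|_2^2$ since $\bv{x}$ is orthogonal to $\ker\bv{K}$. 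Combined with $\gamma(d)\leq\lambda_l$, this yields $\bv{x}^\top\bv{K}(d)\bv{x} = \bv{x}^\top\bv{K}\bv{x} + \gamma(d)\|\bv{x}\|_2^2 \leq 2\bv{x}^\top\bv{K}\bv{x}$, which is exactly $\bv{K}(d)\preceq_r 2\bv{K}$.

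For the Laplacian part, I would bound $\lambda_u \leq 2n$ using the standard fact that a graph Laplacian satisfies $\lambda_{\max}(\bv{K}) \leq 2 d_{\max} \leq 2(n-1) < 2n$ (e.g., via Gershgorin on $\bv{K}$, whose diagonal entries equal degrees and whose off-diagonal row-sums equal the same degree). The lower bound $\lambda_{\min} > 8/n^2$ for the smallest nonzero eigenvalue of the Laplacian of a connected unweighted graph follows from a classical algebraic-connectivity bound such as $\lambda_2 \geq 1/(n \cdot \mathrm{diam}(G))$ (or its sharper variant), together with $\mathrm{diam}(G) \leq n-1$. For a disconnected graph one applies the same bound component-wise. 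Plugging these in gives $d = \lceil\log_2(\lambda_u/\lambda_l)\rceil = O(\log n)$.

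The main obstacle, such as it is, is just getting Part 1 correct: one has to be careful that the upper bound only holds on the row span (the nullspace direction would give $\bv{x}^\top\bv{K}\bv{x}=0$ while $\bv{x}^\top\bv{K}(d)\bv{x}=\gamma(d)\|\bv{x}\|_2^2>0$), which is precisely why the statement is written with $\preceq_r$ rather than $\preceq$. Everything else is a one-line PSD computation.
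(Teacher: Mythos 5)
Your argument for Relations 1--3 is essentially identical to the paper's: direct algebraic comparisons exploiting $\gamma(\ell-1)=2\gamma(\ell)$, with the one genuinely delicate point being that the upper bound in Relation 1 holds only on the row span of $\bv{K}$, where $\bv{x}^\top\bv{K}\bv{x}\geq\lambda_l\|\bv{x}\|_2^2$. You identify this correctly and handle it the same way the paper does (restricting to $\bv{x}\perp\ker\bv{K}$).

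For the Laplacian eigenvalue bounds you take a mildly different route. The paper bounds $\lambda_{\max}$ by comparison with the complete graph (a subgraph Laplacian satisfies $\bv{K}\preceq\bv{L}_{K_n}$, so $\lambda_{\max}\leq n$) and simply cites Lemma~6.1 of Spielman--Teng for $\lambda_{\min}>8/n^2$; you use Gershgorin for $\lambda_{\max}\leq 2(n-1)<2n$ and a diameter-based algebraic connectivity bound for $\lambda_{\min}$. Both yield $d=O(\log n)$. One small caveat: the specific bound you quote, $\lambda_2\geq 1/(n\cdot\mathrm{diam}(G))$ with $\mathrm{diam}(G)\leq n-1$, gives only $\lambda_2>1/n^2$, not the stated $8/n^2$; even Mohar's sharper $\lambda_2\geq 4/(n\cdot\mathrm{diam}(G))$ only yields $4/n^2$. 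If you want the exact constant in the theorem statement, you should cite the Spielman--Teng lemma directly as the paper does. Since the conclusion $d=O(\log n)$ is insensitive to this constant, the proof still goes through as written.
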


For completeness, we include a proof of Theorem \ref{miller_peng_chain} in Appendix \ref{miller_peng_appendix}.  Now, to prove our main result, we need to state the sampling primitive for streams that we develop in Section \ref{streaming_row_sampling}. This procedure maintains a linear sketch of a vertex edge incidence matrix $\bv{B}$, and using a coarse sparsifier of $\bv{K}(\ell) = \bv{B}^\top \bv{B} + \gamma(\ell) \bv{I}$, performs independent edge sampling as required by Lemma \ref{sparsifier_sampling}, to obtain a better sparsifier of $\bv{K}(\ell)$.




\begin{theorem}\label{refinement}
Let $\bv{B} \in \mathbb{R}^{n \times m}$ be the vertex edge incidence matrix of an unweighted graph $G$, specified by an insertion-deletion graph stream. Let $\gamma = \poly(n)$ be a fixed parameter and consider $\bv{K} = \bv{B}^\top \bv{B} + \gamma \bv{I}$. For any $0< \epsilon < 1$, there exists a sketching procedure $\maintain(\bv{B},\epsilon)$ that outputs an $O(n \plog (n))$ sized sketch $\bv{\Pi}\bv{B}$. There exists a corresponding recovery algorithm \texttt{RefineSparsifier} running in $O(n \plog (n))$ space, such that,
if $\bv{\tilde K}$ is a spectral approximation to $\bv{K}$ with $O(n \log n)$ non-zeros and $c\bv{K} \preceq_{r} \bv{\tilde K} \preceq_r \bv{K}$ for some constant $0 < c < 1$ then:\\

 $\texttt{RefineSparsifier}(\bv{\Pi}\bv{B}, \bv{\tilde K}, \gamma, \epsilon,c)$ returns, with high probability, $\bv{\tilde K}_\epsilon = \bv{\tilde B}_\epsilon^\top \bv{\tilde B}_\epsilon + \gamma{\bv{I}}$, where $(1-\epsilon)\bv{K} \preceq_{r} \bv{\tilde K}_\epsilon \preceq_r (1+\epsilon)\bv{K}$, and $\bv{\tilde B}_\epsilon$ contains only $O(\epsilon^{-2} c^{-1} n\log n)$ reweighted rows of $\bv{B}$ with high probability.
$\texttt{RefineSparsifier}$ runs in $O(n^2\plog (n))$ time.
\end{theorem}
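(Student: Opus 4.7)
My plan is to realize the sampling scheme of Lemma~\ref{sparsifier_sampling} in the streaming model, using the coarse sparsifier $\bv{\tilde K}$ to approximate leverage scores and using the $\ell_2$ heavy hitters sketch of Lemma~\ref{sparse_recovery_primitive} to decide, from the linear sketch of $\bv{B}$, which edges actually belong to $G$. For each edge $e$ in the complete graph on $n$ vertices, define the approximate (ridge) leverage score $\tilde\tau_e = \bv{b}_e^\top \bv{\tilde K}^+ \bv{b}_e$. Since $c\bv{K} \preceq_r \bv{\tilde K} \preceq_r \bv{K}$, we have $\tau_e \le \tilde\tau_e \le (1/c)\tau_e$, so $\tilde\tau_e$ is a valid overestimate. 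If we could sample every edge $e$ independently with probability $p_e = \min\{1,\, C \epsilon^{-2} \log n \cdot \tilde\tau_e\}$, include it in $\bv{\tilde B}_\epsilon$ with weight $1/\sqrt{p_e}$, and finally set $\bv{\tilde K}_\epsilon = \bv{\tilde B}_\epsilon^\top \bv{\tilde B}_\epsilon + \gamma \bv{I}$, then applying Lemma~\ref{sparsifier_sampling} to the augmented matrix $[\bv{B};\sqrt{\gamma}\bv{I}]$ (whose identity block is kept deterministically) yields $(1-\epsilon)\bv{K}\preceq_r \bv{\tilde K}_\epsilon \preceq_r (1+\epsilon)\bv{K}$ with high probability, and $O(\epsilon^{-2}c^{-1}n\log n)$ sampled rows since $\sum_e \tilde\tau_e \le n/c$.

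To implement this sampling from a linear sketch, the procedure $\maintain$ maintains, for each $s \in \{0,1,\dots,O(\log n)\}$, a sketch $\bv{\Pi}_s \bv{S}^{(s)} \bv{B}_n$, where $\bv{S}^{(s)}$ is a $\binom{n}{2} \times \binom{n}{2}$ diagonal $0/1$ matrix that independently keeps each possible edge with probability $2^{-s}$ (implemented with a hash function) and $\bv{\Pi}_s$ is drawn from the heavy hitters distribution of Lemma~\ref{sparse_recovery_primitive} with $\eta = \Theta(1/\plog(n))$. By linearity, $\bv{\Pi}_s \bv{S}^{(s)} \bv{B}_n$ can be maintained under edge insertions and deletions, and, by right multiplication against the vector $\bv{\tilde K}^+ \bv{b}_e$ (computable in nearly linear time via an SDD solver since $\bv{\tilde K}$ has $O(n\log n)$ nonzeros), we recover the sketch of $\bv{x}_e^{(s)} := \bv{S}^{(s)}\bv{S}\bv{B}_n \bv{\tilde K}^+ \bv{b}_e$. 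Each nonzero entry of $\bv{x}_e^{(s)}$ corresponds to an edge $e'$ that is present in $G$ and survives sampling at level $s$, and its value is the voltage drop across $e'$ induced by unit current between the endpoints of $e$.

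\refine\ iterates over every candidate edge $e$, selects the level $s_e$ with $2^{-s_e} \in [\tilde\tau_e,\, 2\tilde\tau_e]$ (truncating to $s_e=0$ when $\tilde\tau_e \ge 1/2$), runs the heavy hitters decoder on $\bv{\Pi}_{s_e} \bv{x}_e^{(s_e)}$, and includes $e$ in $\bv{\tilde B}_\epsilon$ with the appropriate weight iff the returned vector has a sufficiently large coordinate at index $e$. The main obstacle, and the crux of the argument, is showing that decoding succeeds with high probability for every $e$: when $e \in G$, $\bv{x}_e^{(s_e)}(e) = \tau_e \bv{S}^{(s_e)}(e,e)$, and we must prove that after subsampling at rate $\approx \tau_e$ we still have $\|\bv{x}_e^{(s_e)}\|_2^2 = O(\tau_e^2 \cdot \plog(n))$ with sufficient probability, so that $\bv{x}_e^{(s_e)}(e)/\|\bv{x}_e^{(s_e)}\|_2 \ge 2\eta$ and Lemma~\ref{sparse_recovery_primitive} recovers $e$; this requires a variance calculation over the random subsampling using $\E\|\bv{x}_e^{(s_e)}\|_2^2 = 2^{-s_e}\|\bv{S}\bv{B}_n \bv{\tilde K}^+ \bv{b}_e\|_2^2 \le 2^{-s_e}\cdot c^{-1}\tau_e = O(\tau_e)$, combined with a Markov/Chernoff bound and a union bound over all $\binom{n}{2}$ candidate edges (which is why the heavy hitters routine is invoked at inverse polylog precision). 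Once this is established, the sampling is exactly of the form required by Lemma~\ref{sparsifier_sampling}, yielding the spectral guarantee and the sparsity bound. The space bound $O(n\plog n)$ follows from maintaining $O(\log n)$ sketches each of dimension $O(\plog(n)) \times n$ (since $\bv{\Pi}_s \bv{B}_n \in \R^{O(\plog n)\times n}$), and the $O(n^2\plog n)$ runtime follows from one nearly linear solve per candidate edge.
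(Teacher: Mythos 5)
Your overall plan matches the paper's: estimate ridge leverage scores via $\bv{\tilde K}$, geometrically subsample the complete-graph rows, apply $\ell_2$ heavy hitters to $\bv{\Pi}_s\bv{B}_s\bv{\tilde K}^+\bv{b}_e$ to decide whether $e\in G$ and survived sampling, and feed the result to Lemma~\ref{sparsifier_sampling}. But two steps, both essential, do not go through as written. First, the subsampling level is off by a $\log n\,\epsilon^{-2}$ factor. You correctly define $p_e = \min\{1, C\epsilon^{-2}\log n\cdot\tilde\tau_e\}$ as the target inclusion probability, but then pick $s_e$ with $2^{-s_e}\in[\tilde\tau_e, 2\tilde\tau_e]$. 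Since the inclusion probability of $e$ in the output is precisely the survival rate $2^{-s_e}$ (given the heavy hitters detect correctly), your procedure samples at rate $\Theta(\tilde\tau_e)$ rather than $p_e$; that produces $O(n)$ edges in expectation and cannot give a $(1\pm\epsilon)$ guarantee via Lemma~\ref{sparsifier_sampling}. The level must instead satisfy $\min\{1,p_e\}\le 2^{-s_e}\le\min\{1,2p_e\}$. With the corrected rate, $\E\|\bv{x}_e^{(s_e)}\|_2^2 = 2^{-s_e}\|\bv{x}_e\|_2^2 = \Theta(\epsilon^{-2}\log n\cdot\tilde\tau_e^2/c)$ (note also $\|\bv{x}_e\|_2^2\le\tilde\tau_e/c$, not $\tau_e/c$, since $\tau_e\le\tilde\tau_e$), and this is exactly what makes $|\bv{x}_e^{(s_e)}(e)|/\|\bv{x}_e^{(s_e)}\|_2 = \Theta(\epsilon/\sqrt{\log n})$ and forces $\eta=\Theta(\epsilon/\sqrt{\log n})$. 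Your stated $\eta = \Theta(1/\plog n)$ is consistent with the corrected rate but not with the rate you actually chose (under which $\eta=\Theta(1)$ would suffice), so the two pieces of your argument are internally inconsistent.

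Second, and more substantively, you identify the concentration of $\|\bv{x}_e^{(s_e)}\|_2^2$ as ``the crux'' but leave it at ``a Markov/Chernoff bound.'' Markov gives only a constant tail, which collapses under the union bound over $\binom{n}{2}$ candidate edges, and a Chernoff bound applies only if the independent summands are uniformly bounded --- a priori a single coordinate $\bv{x}_e(e')^2$ could be as large as $\|\bv{x}_e\|_2^2$ and dominate the sum. The missing idea is that for every pair $e,e'$, $|\bv{x}_e(e')| = |\bv{b}_{e'}^\top\bv{\tilde K}^+\bv{b}_e|\le\tilde\tau_e$ (Lemma~\ref{leverage_score_bound} of the paper). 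This is proved by observing that $\bv{\tilde K}$ is full-rank and diagonally dominant, so the ``potential vector'' $\bv{\tilde K}^+\bv{b}_e$ attains its maximum at $e$'s head and minimum at its tail; hence every other edge's potential drop is dominated by $e$'s own drop $\tilde\tau_e$. This bound is a structural fact about graph Laplacians (it fails for general $\ma^\top\ma$, which is precisely why Section~\ref{structured} needs the more involved independent-repetition argument) and is what turns $\|\bv{x}_e^{(s_e)}/\tilde\tau_e\|_2^2$ into a sum of independent $[0,1]$-bounded variables, so that Chernoff yields the required $n^{-\Theta(1)}$ tail. Without it, the recovery argument does not close.
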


Using this sampling procedure, we can initially set $\bv{\tilde K} =2\gamma(0)\bv{I}$ and use it obtain a sparsifier for $\bv{K}(0)$ from a linear sketch of $\bv{B}$. This sparsifier is then used on a second sketch of $\bv{B}$ to obtain a sparsifier for $\bv{K}(1)$, and so on. Working up the chain, we eventually obtain a sparsifier for our original $\bv{K}$. While sparsifier recovery proceeds in several levels, we construct all required sketches in a \emph{single pass} over edge insertions and deletions. Recovery is performed in post-processing.

\begin{proof}[Proof of Theorem \ref{main_sparsification_theorem}]
Let $\bv{K}$ be the Laplacian of our graph $G$. Process all edge insertions and deletions, using $\maintain$ to produce a sketch, $(\bv{\Pi}\bv{B})_{\ell}$ for each $\ell \in \{0, 1,\ldots,  \lceil \log_2 \lambda_{u}/\lambda_{l}\rceil + 1\}$.
We then use Theorem \ref{refinement} to recover an $\epsilon$ approximation, $\bv{\tilde K}(\ell)$, for any  $\bv{K}(\ell)$ given an $\epsilon$ approximation for $\bv{K}(\ell-1)$. First, consider the base case, $\bv{K}(0)$. Let:
\begin{align*}
\bv{\tilde K}(0) = \texttt{RefineSparsifier}((\bv{\Pi}\bv{B})_{0}, \gamma(0)\bv{I}, \gamma(0), \epsilon,\frac{1}{2}).
\end{align*}
By Theorem \ref{miller_peng_chain}, Relation 3:
\begin{align*}
\frac{1}{2}\bv{K}(0) \preceq \gamma(0)\bv{I} \preceq \bv{K}(0).
\end{align*}
Thus, with high probability, $(1-\epsilon)\bv{K}(0) \preceq_{r} \bv{\tilde K}(0) \preceq_r (1+\epsilon)\bv{K}(0)$ and $\bv{\tilde K}(0)$ contains $O((1/2)^{-1} \cdot n\log n \cdot \epsilon^{-2}) = O(\epsilon^{-2} n\log n)$ entries.

Now, consider the inductive case. Suppose we have some $\bv{\tilde K}(\ell - 1)$ such that $(1-\epsilon)\bv{K}(\ell - 1) \preceq_{r} \bv{\tilde K}(\ell - 1) \preceq_r (1+\epsilon)\bv{K}(\ell - 1)$. Let:
\begin{align*}
\bv{\tilde K}(\ell) = \texttt{RefineSparsifier}((\bv{\Pi}\bv{B})_{\ell}, \frac{1}{2(1+\epsilon)}\bv{\tilde K}(\ell - 1), \gamma(\ell), \epsilon, \frac{1-\epsilon}{2(1+\epsilon)}).
\end{align*}
By Theorem \ref{miller_peng_chain}, Relation 2:
\begin{align*}
\frac{1}{2}\bv{K}(\ell) \preceq \frac{1}{2}\bv{K}(\ell-1) \preceq \bv{K}(\ell).
\end{align*}
Furthermore, by assumption we have the inequalities:
\begin{align*}
\frac{1-\epsilon}{1+\epsilon}\bv{K}(\ell - 1) \preceq_{r} \frac{1}{1+\epsilon}\bv{\tilde K}(\ell - 1) \preceq_r \bv{K}(\ell - 1).
\end{align*}
Thus:
\begin{align*}
\frac{1-\epsilon}{2(1+\epsilon)}\bv{K}(\ell) \preceq_{r} \frac{1}{2(1+\epsilon)}\bv{\tilde K}(\ell - 1) \preceq_r \bv{K}(\ell).
\end{align*}
So, with high probability $\texttt{RefineSparsifier}$ returns $\bv{\tilde K}(\ell)$ such that $(1-\epsilon)\bv{K}(\ell) \preceq_{r} \bv{\tilde K}(\ell) \preceq_r (1+\epsilon)\bv{K}(\ell)$ and $\bv{\tilde K}(\ell)$ contains just $O((\frac{2(1+\epsilon)}{1-\epsilon})^2 \epsilon^{-2} n\log n ) = O(\epsilon^{-2} n\log n)$ nonzero elements. It is important to note that there is no ``compounding of error'' in this process. Every $\bv{\tilde K}(\ell)$ is an $\epsilon$ approximation for $\bv{K}(\ell)$. Error from using $\bv{\tilde K}(\ell-1)$ instead of $\bv{K}(\ell-1)$ is absorbed by a constant factor increase in the number of rows sampled from $\bv{B}$. The corresponding increase in sparsity for $\bv{K}(\ell)$ does not compound -- in fact Theorem \ref{refinement} is completely agnostic to the sparsity of the coarse approximation $\bv{\tilde K}$ used.

Finally, to obtain a bonafide graph sparsifier (a weighted subgraph of our streamed graph), let:
\begin{align*}
\bv{\tilde K} = \texttt{RefineSparsifier}((\bv{\Pi}\bv{B})_{d + 1}, \frac{1}{2(1+\epsilon)}\bv{\tilde K}(d), 0, \epsilon, \frac{1-\epsilon}{2(1+\epsilon)}).
\end{align*}
As in the inductive case,
\begin{align*}
\frac{1-\epsilon}{2(1+\epsilon)}\bv{K} \preceq_{r} \frac{1}{2(1+\epsilon)}\bv{\tilde K}(d) \preceq_r \bv{K}.
\end{align*}
Thus, it follows that, with high probability, $\bv{\tilde K}$ has sparsity $O( \epsilon^{-2} n\log n)$ and $(1-\epsilon)\bv{K} \preceq_{r} \bv{\tilde K} \preceq_r (1+\epsilon)\bv{K}$. Since we set $\gamma$ to 0 for this final step, $\bv{\tilde K}$ simply equals $\bv{\tilde B}^\top \bv{\tilde B}$ for some $\bv{\tilde B}$ that contains reweighted rows of $\bv{B}$. Any vector in the kernel of $\bv{B}$ is in the kernel of $\bv{\tilde B}$, and thus any vector in the kernel of $\bv{K}$ is in the kernel of $\bv{\tilde K}$. Thus, we can strengthen our approximation to:
\begin{align*}
(1-\epsilon)\bv{K} \preceq \bv{\tilde K} \preceq (1+\epsilon)\bv{K}.
\end{align*}
We conclude that $\bv{\tilde K}$ is the Laplacian of some graph $H$ containing $O(\epsilon^{-2} n\log n)$ reweighted edges and approximating $G$ spectrally to precision $\epsilon$. Finally, note that we require $d+1 = O(\log n)$ recovery steps, each running in $O(n^2\plog (n))$ time. Thus, our total recovery time is $O(n^2\plog (n))$.
\end{proof}

\section{Streaming Row Sampling}\label{streaming_row_sampling}


In this section, we develop the sparsifier refinement routine required for Theorem \ref{main_sparsification_theorem}.

\begin{proof}[Proof of Theorem \ref{refinement}]

Outside of the streaming model, given full access to $\bv{B}$ rather than just a sketch $\bv{\Pi}\bv{B}$ it is easy to implement $\refine$ via leverage score sampling. Letting $\oplus$ denote appending the rows of one matrix to another, we can define $\bv{B}_\gamma = \bv{B} \oplus \sqrt{\gamma(\ell)} \cdot \bv{I}$, so $\bv{K} = \bv{B}^\top \bv{B} + \gamma \bv{I} = \bv{B}_\gamma^\top \bv{B}_\gamma$. Since $\tau_i = \bv{b}_i^\top  \bv{ K}^+ \bv{b}_i$ and $c\bv{K} \preceq_r \bv{\tilde K} \preceq_r \bv{K}$, for any row of $\bv{B}_\gamma$ we have
\begin{align*}
 \tau_i \le \bv{b}_i^\top \bv{\tilde K}^+ \bv{b}_i \le \frac{1}{c}\tau_i .
\end{align*}

Let $\tilde{\tau_i} = \bv{b}_i^\top \bv{\tilde K}^+ \bv{b}_i$ be the leverage score of $\bv{b}_i$ approximated using $\bv{\tilde K}$. Let $\bs{\tilde \tau}$ be the vector of approximate leverage scores, with the leverage scores of the $n$ rows corresponding to $\sqrt{\gamma(\ell)} \bv{I}$ rounded up to $1$. While not strictly necessary, including rows of the identity with probability $1$ will simplify our analysis in the streaming setting. Using this $\bs{\tilde \tau}$ in Lemma \ref{sparsifier_sampling}, we can obtain $\bv{\tilde K}_\epsilon \approx_\epsilon \bv{K}$ with high probability. Since $\norm{\bs{\tilde \tau}}_1 \le \frac{1}{c}\norm{\bs{\tau}}_1 + n \le \frac{1}{c} \cdot \rank(\bv{B}) + n \le \frac{n}{c} + n$, we can write  $\bv{\tilde K}_\epsilon = \bv{\tilde B}_\epsilon^\top \bv{\tilde B}_\epsilon + \gamma \bv{I}$, where $\bv{\tilde B}_\epsilon$ contains $O(\epsilon^{-2}c^{-1} n\log n)$ reweighted rows of $\bv{B}$ with high probability.

The challenge in the semi-streaming setting is actually sampling edges given only a sketch of $\bv{B}$. The general idea is explained in Section \ref{algorithm_overview}, with detailed pseudocode included below. 
\begin{framed}{\noindent\bfseries Streaming Sparsifier Refinement}

\paragraph{$\maintain(\bv{B}, \epsilon)$:}
\begin{enumerate}

\item For $s \in \{1,...O(\log n)\}$ let $h_s: {n \choose 2} \rightarrow \{0,1 \}$ be a uniform hash function. Let $\bv{B}_s$ be $\bv{B}$ with all rows except those with $\prod_{j \le s} h_j(e) = 0$ zeroed out. So $\bv{B}_s$ is $\bv{B}$ with rows sampled independently at rate $\frac{1}{2^s}$. $\bv{B}_0$ is simply $\bv{B}$.  

\item Maintain sketchs $\bv{\Pi}_0\bv{B}_0, \bv{\Pi}_{1}\bv{B}_{1}, ... , \bv{\Pi}_{O(\log n)}\bv{B}_{O(\log n)}$ where $\{\bv{\Pi}_0,\bv{\Pi}_1,...\bv{\Pi}_{O(\log n)} \}$ are drawn from the distribution from Lemma \ref{sparse_recovery_primitive} with $\eta = \frac{\epsilon}{c_1\sqrt{\log n}}$.
\item Output all of these sketches stacked: $\bs{\Pi}\bv{B} = \bv{\Pi}_0\bv{B}_0 \oplus \ldots \oplus \bv{\Pi}_{O(\log n)}\bv{B}_{O(\log n)}$.

\end{enumerate}

\paragraph{$\refine(\bv{\Pi}\bv{B}, \bv{\tilde K}, \gamma, \epsilon,c)$:}
\begin{enumerate}
\item Compute $\bv{\Pi}_s\bv{B}_s\bv{\tilde K}^+$ for each $s \in \{0,1,2,...O(\log n)\}$.

\item For every edge $e$ in the set of ${n \choose 2}$ possible edges:
\begin{enumerate}
\item Compute $\tilde \tau_e = \bv{b}_e^\top \bv{\tilde K}^+ \bv{b}_e$ and $p_e =  c_2\tilde \tau_e \log n \epsilon^{-2}$, where $c_2$ is the oversampling constant from Lemma \ref{sparsifier_sampling}. Choose $s$ such that $ \min \{1, p_e \} \le \frac{1}{2^s} \le \min \{1, 2 p_e \}$.

\item Compute $ \bv{\Pi}_s\bv{x}_e = \bv{\Pi}_s\bv{B}_s\bv{\tilde K}^+\bv{b}_e$ and run the heavy hitters algorithm of Lemma \ref{sparse_recovery_primitive}. Determine whether or not $\bv{x}_e = 0$ or $\bv{x}_e \geq \frac{\epsilon}{c_1\sqrt{\log n}}\|\bv{B}_s\bv{\tilde K}^+\bv{b}_e\|_2$ by checking whether the returned $\bv{w}_e > \tilde \tau_e/2$. 

\item If it is determined that $\bv{ x}_e(e) \neq 0$ set $\bv{W}(e,e) = 2^s$.

\end{enumerate}

\item Output $\bv{\tilde K}_\epsilon = \bv{B}^\top \bv{W} \bv{B} + \gamma \bv{I}$.
\end{enumerate}
\end{framed}

We show that every required computation can be performed in the dynamic semi-streaming model and then prove the correctness of the sampling procedure. 
\subsubsection*{Implementation in the Semi-Streaming Model.}
Assuming access to uniform hash functions, $\maintain$ requires $O(n\plog (n))$ space in total and can be implemented in the dynamic streaming model. When an edge insertion comes in, use $\{ h_s \}$ to compute which $\bv{B}_s$'s should contain the inserted edge, and update the corresponding sketches. For an edge deletion, simply update the sketches to add $-\bv{b}_e$ to each appropriate $\bv{B}_s$.

Unfortunately, storing $O(\log n)$ uniform hash functions over ${n\choose 2}$ requires $O(n^2\log n)$ space, and is thus impossible in the semi-streaming setting. If Section \ref{pseudorandomness} we show how to cope with this issue by using a small-seed pseudorandom number generator.

Step 1 of $\refine$ can also be implemented in $O(n \plog n)$ space. Since $\bv{\tilde K}$ has $O(n\log n)$ non-zeros and $\bv{\Pi}_s\bv{B}_s$ has $O(\plog n)$ rows, computing $\bv{\Pi}_s\bv{B}_s\bv{\tilde K}^+$ requires $O(\plog n)$ linear system solves in $\bv{\tilde K}$. We can use an iterative algorithm or a nearly linear time solver for symmetric diagonally dominant matrices to find solutions in $O(n \plog n)$ space total.

For step 2(a), the $s$ chosen to guarantee $ \min \{1, p_e \} \le \frac{1}{2^s} \le \min \{1, 2p_e \}$ could in theory be larger than the index of the last sketch $\bs{\Pi}_i\bv{B}_i$ maintained. However, if we take $O(\log n)$ samplings, our last will be empty with high probability. Accordingly, all samplings for higher values of $s$ can be considered empty as well and we can just skip steps 2(b) and 2(c) for such values of $s$. Thus, $O(\log n)$ sampling levels are sufficient.

Finally, by our requirement that $\bv{\tilde K}$ is able to compute $\frac{1}{c}$ factor leverage score approximations, with high probability, Step 2 samples at most $O(n \log n \epsilon^{-2})$ edges in total (in addition to selecting $n$ identity edges). Thus, the procedure's output can be stored in small space.

\subsubsection*{Correctness}

To apply our sampling lemma, we need to show that, with high probability, $\refine$ independently samples each row of $\bv{B}$ with probability $\hat p_e$ where $ \min \{1, p_e \} \le \hat p_e \le \min \{1, 2p_e \}$. Since the algorithm samples the rows of $\sqrt{\gamma} \bv{I}$ with probability $1$, and since $\tau_e \le \tilde \tau_e \le \frac{1}{c} \tau_e$ for all $e$, by Lemma \ref{sparsifier_sampling}, with high probability, $\bv{\tilde K}_\epsilon = \bv{B} \bv{W} \bv{B} + \gamma \bv{I} = \bv{\tilde B}_\epsilon^\top \bv{\tilde B}_\epsilon + \gamma \bv{I}$ is a $(1\pm \epsilon)$ spectral sparsifier for $\bv{K}$. Furthermore, $\bv{\tilde B}_\epsilon$ contains $O(\epsilon^{-2}c^{-1} n\log n )$ reweighted rows of $\bv{B}$.

In $\refine$, an edge is only included in $\bv{\tilde K_\epsilon}$ if it is included in the $\bv{B}_{s(e)}$ where
\begin{align*}
\min \{1, p_e \}\le \frac{1}{2^{s(e)}} \le \min \{1, 2p_e \}.
\end{align*}

The probability that $\bv{b}_e$ is included in the sampled matrix $\bv{B}_{s(e)}$ is simply $1/2^{s(e)}$, and sampling is done independently using uniform hash functions. So, we just need to show that, with high probability, any $\bv{b}_e$ included in its respective $\bv{B}_{s(e)}$ is recovered by Step 2(b).

Let $\bv{x}_e = \bv{B}\bv{\tilde K}^+\bv{b}_e$ and $\bv{x}_e^{s(e)} = \bv{B}_{s(e)}\bv{\tilde K}^+\bv{b}_e$. 
As explained in Section \ref{algorithm_overview},
\begin{align}
\label{entry_is_lev_score}
\bv{x}_e^{s(e)}(e) = \bv{x}_e(e) &= \bv{1}_e \bv{B} \bv{\tilde K}^+ \bv{b}_e = \bv{b}_e^\top \bv{\tilde K}^+ \bv{b}_e = \tilde \tau_e
\end{align} 
Furthermore, we can compute:
\begin{align}
\norm{\bv{x}_e}_2^2 &=  \bv{b}_e^\top  \bv{\tilde K}^+ \bv{B}^\top \bv{B} \bv{\tilde K}^+ \bv{b}_e\nonumber\\
&\le  \bv{b}_e^\top  \bv{\tilde K}^+ \bv{B}_\gamma^\top \bv{B}_\gamma \bv{\tilde K}^+ \bv{b}_e\tag{Since $\bv{B}^\top \bv{B} \preceq \bv{B}_\gamma^\top \bv{B}_\gamma$}\\
&\le \frac{1}{c} \cdot \bv{b}_e^\top \bv{\tilde K}^+ \bv{b}_e\tag{Since $c \left (\bv{B}_\gamma^\top \bv{B}_\gamma \right ) \preceq \bv{\tilde K}$}\\
\label{real_eqn}
&= \frac{1}{c} \tilde \tau_e
\end{align} 
Now, writing $\hat p_e = \frac{1}{2^{s(e)}}$, we \emph{expect} $\|\bv{x}_e^{s(e)}(e)\|_2^2$ to equal $\hat{p}_e \norm{\bv{x}_e}_2^2 = O(\tilde{\tau}_e^2\log n\epsilon^{-2})$. We want to argue that the norm falls close to this value with high probability. This follows from claiming that no entry in $\bv{x}_e$ is too large.
For any edge $e' \neq e$ define:
\begin{align*}
\tilde \tau_{e',e} \eqdef \bv{x}_e(e') &= \bv{1}_{e'} \bv{B} \bv{\tilde K}^+ \bv{b}_e = \bv{b}_{e'}^\top \bv{\tilde K}^+ \bv{b}_e.
\end{align*}

\begin{lemma}\label{leverage_score_bound}$\tilde \tau_{e',e} \le \tilde \tau_e$\end{lemma}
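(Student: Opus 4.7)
The plan is to interpret $\tilde\tau_{e',e}$ and $\tilde\tau_e$ via the electrical-network analogy and invoke a maximum principle. Setting $\bv{\phi} \eqdef \bv{\tilde K}^{-1}\bv{b}_e$ (well-defined because $\bv{\tilde K} \succeq \gamma\bv{I} \succ 0$) and writing $\bv{b}_e = \bv{e}_u - \bv{e}_v$, $\bv{b}_{e'} = \bv{e}_{u'} - \bv{e}_{v'}$, we have $\tilde\tau_e = \phi_u - \phi_v$ and $\tilde\tau_{e',e} = \phi_{u'} - \phi_{v'}$. Thus the lemma reduces to showing $\phi_u = \max_{a} \phi_a$ and $\phi_v = \min_{a} \phi_a$: a maximum principle for the equation $\bv{\tilde K}\bv{\phi} = \bv{e}_u - \bv{e}_v$.

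The obstacle is that $\bv{\tilde K}$ is not itself a graph Laplacian---it has a diagonal shift $\gamma\bv{I}$---so the classical max principle for Laplacians does not apply as a black box. I would handle this with the standard ``grounding'' trick: form an augmented graph $\tilde G'$ on $V \cup \{g\}$ in which the new vertex $g$ is joined to every $a \in V$ by an edge of weight $\gamma$. Using that $\bv{\tilde K} = \bv{\tilde B}^\top\bv{\tilde B} + \gamma\bv{I}$ has row sums exactly $\gamma$ (because the rows of $\bv{\tilde B}$ are edge vectors whose columns sum to zero), one checks that the Laplacian of $\tilde G'$ has the block form
\[
\bv{L}' = \begin{pmatrix} \bv{\tilde K} & -\gamma\bv{1} \\ -\gamma\bv{1}^\top & n\gamma \end{pmatrix}.
\]
Taking $\bv{\phi}'$ to be any solution of $\bv{L}'\bv{\phi}' = (\bv{b}_e^\top,\,0)^\top$, a short computation using the row-sum identity $\bv{\tilde K}^{-1}(\gamma\bv{1}) = \bv{1}$ shows that $\phi_a = \phi_a' - \phi_g'$ for all $a \in V$; in other words, our $\bv{\phi}$ is the restriction of $\bv{\phi}'$ to $V$ up to a constant shift.

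Now the classical Laplacian maximum principle applies on the connected graph $\tilde G'$: at each vertex other than $u$ and $v$ the potential is a convex combination of its neighbors' potentials and hence cannot be a strict extremum, while Kirchhoff's current law at $u$ (one unit out) and at $v$ (one unit in) rules out $v$ being a maximum and $u$ being a minimum. Hence $\phi_u' = \max_{V\cup\{g\}}\phi'$ and $\phi_v' = \min_{V\cup\{g\}}\phi'$. Subtracting the constant $\phi_g'$ preserves ordering, so $\phi_u = \max_{a\in V}\phi_a$ and $\phi_v = \min_{a\in V}\phi_a$, and we conclude $\tilde\tau_{e',e} = \phi_{u'} - \phi_{v'} \le \phi_u - \phi_v = \tilde\tau_e$. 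The real work is in the augmented-graph reduction and the routine verification of the max principle; alternatively one could give a direct algebraic proof using that $\bv{\tilde K}^{-1}$ is an inverse $M$-matrix with constant row sums $1/\gamma$, but the electrical picture is shorter and more transparent.
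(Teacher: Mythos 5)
Your proof is correct and rests on the same core fact as the paper's: that $\bv{\tilde v}_e = \bv{\tilde K}^+\bv{b}_e$ attains its maximum at $u$ and its minimum at $v$, which is a discrete maximum principle. The paper obtains this directly from the diagonal dominance of $\bv{\tilde K}$ (a weighted Laplacian plus a positive multiple of the identity), while you route through the grounding reduction to a genuine Laplacian on an augmented graph and invoke the classical Laplacian maximum principle there --- an equivalent, if slightly longer, path to the same conclusion.
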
.
\begin{proof}


Consider $\bv{\tilde  v}_e = \bv{\tilde K}^+ \bv{b}_e$. 
Let $e = (u_1,u_2)$ and $e' = (u_1',u_2')$. If we have $|\bv{\tilde v}_e(u_1') - \bv{\tilde v}_e(u_2)'| \le |\bv{\tilde v}_e(u_1) - \bv{\tilde v}_e(u_2)|$ then

\begin{align*}
\bv{b}_{e'}^\top \bv{ \tilde v}_e  = \bv{b}_{e'}^\top \bv{ \tilde K}^+ \bv{b}_e &\le \bv{b}_{e}^\top \bv{ \tilde K}^+ \bv{b}_e = \bv{b}_{e}^\top \bv{ \tilde v}_e,
\end{align*}
which implies $\tilde \tau_{e',e} \le  \tilde \tau_{e}$ as desired.

Now, $\bv{\tilde K}$ is a weighted graph Laplacian added to a weighted identity matrix. Thus it is full rank and diagonally dominant. Since it has full rank, $\bv{\tilde K} \bv{\tilde v}_e = \bv{\tilde K} \bv{\tilde K}^+ \bv{b}_e = \bv{b}_e$. Since $\bv{\tilde K}$ is diagonally dominant and since $\bv{b}_e$ is zero everywhere except at $\bv{b}_e(u_1) = 1$ and $\bv{b}_e(u_2) = -1$, it must be that  $\bv{\tilde v}_e(u_1)$ is the maximum value of $\bv{\tilde v}_e$ and $\bv{\tilde v}_e(u_2)$ is the minimum value. So $|\bv{\tilde v}_e(u_1') - \bv{\tilde v}_e(u_2)'| \le |\bv{ \tilde v}_e(u_1) - \bv{\tilde v}_e(u_2)|$ and $\tilde \tau_{e',e} \le \tilde \tau_{e}$.

\end{proof}
From Lemma \ref{leverage_score_bound}, the vector $\frac{1}{\tilde \tau_e}\bv{x}_e$ has all entries (and thus all squared entries) in $[0,1]$ so we can apply a Chernoff/Hoeffding bound to show concentration for $\|\frac{1}{\tilde \tau_e}\bv{x}_e^{s(e)}\|_2^2$. Specifically, we use the standard multiplicative bound \cite{Hoeffding:1963}:
\begin{align}
\Pr(X > (1+\delta)\E{X}) < e^{- 2\delta^2\E{X}}.
\end{align}
Since
\begin{align}\label{x_e_expectation_upper}
\E \norm{\frac{1}{\tilde \tau_e}\bv{x}_e^{s(e)}}_2^2 = \hat{p}_e \cdot \frac{\tilde \tau_e}{c} \cdot \frac{1}{\tilde \tau_e^2} = \Theta(\log n \epsilon^{-2}),
\end{align}
we can set $\delta = \epsilon$ and conclude that
\begin{align*}
\Pr(\norm{\frac{1}{\tilde \tau_e}\bv{x}_e^{s(e)}}_2^2 > (1+\epsilon)\E \norm{\frac{1}{\tilde \tau_e}\bv{x}_e^{s(e)}}_2^2) = O(n^{-\Theta(1)}).
\end{align*} 
Accordingly, $\norm{\bv{x}_e^{s(e)}}_2^2 \leq c_3 \tilde \tau_e^2 \log n \epsilon^{-2}$ with high probability for some constant $c_3$ and $\epsilon \leq 1$.

Now, if $\bv{x}_e^{s(e)}(e) = 0$, then our sparse recovery routine must return an estimated value for $\bv{x}_e$ that is $\leq \eta \norm{\bv{x}_e^{s(e)}}_2$. We set $\eta = \frac{\epsilon}{c_1 \sqrt{\log n}}$, so with high probability, the returned value is $< \frac{\epsilon}{c_1 \sqrt{\log n}}\sqrt{\norm{\bv{x}_e^{s(e)}}_2^2} = \frac{\sqrt{c_3}\tilde{\tau}_e}{c_1}$. On the other hand, if $\bv{x}_e^{s(e)}(e)$ is non-zero, it equals $\tilde{\tau}_e$, so our sparse recovery sketch must return a value greater than $(1 - \frac{\sqrt{c_3}}{c_1})\tilde{\tau}_e$. Therefore, as long as we set $c_1$ high enough, we can distinguish between both cases by simply checking whether or not the return value is $> \tilde{\tau}_e/2$, as described for Step 2.

Thus, as long as $\norm{\bv{x}_e^{s(e)}}_2^2$ concentrates as described, our procedure recovers $e$ if and only if $\bv{b}_e$ is included in $\bv{B}_{s(e)}$. As explained, this ensures that our process is exactly equivalent to independent sampling. Since concentration holds with probability $O(n^{-\Theta(1)})$, we can adjust constants and union bound over all ${n \choose 2}$ possible edges to claim that our algorithm returns the desired $\bv{\tilde K_\epsilon}$ with high probability.

\end{proof}



\section{Sparsification of Weighted Graphs}\label{weighted}

We can use a standard technique to extend our result to streams of weighted graphs in which an edge's weight is specified at deletion, matching what is known for cut sparsifiers in the dynamic streaming model \cite{gssss,goel2012single}. Assume that all edge weights and the desired approximation factor $\epsilon$ are polynomial in $n$, then we can consider the binary representation of each edge's weight out to $O(\log n)$ bits. For each bit of precision, we maintain a separate unweighted graph $G_0, G_1, ...G_{O(\log n)}$. We add each edge to the graphs corresponding to bits with value one in its binary representation. When an edge is deleted, its weight is specified, so we can delete it from these same graphs. Since G = $ \sum_i 2^i \cdot G_i$, given a $(1 \pm \epsilon)$ sparsifier $ \bv{\tilde K}_i$ for each $\bv{K}_i$ we have:
\begin{align*}
(1-\epsilon) \sum_i 2^i \cdot \bv{K}_i \preceq  \sum_i 2^i \cdot \bv{\tilde K}_i \preceq (1+\epsilon) \sum_i 2^i \cdot \bv{K}_i\\
(1-\epsilon) \bv{K} \preceq  \sum_i 2^i \cdot \bv{\tilde K}_i \preceq (1+\epsilon) \bv{K}.
\end{align*}
So $ \sum_i 2^i \cdot \bv{\tilde K}_i$ is a spectral sparsifier for $\bv{K}$, the Laplacian of the weighted graph $G$.

\section{Sparsification of Structured Matrices}\label{structured}

Next, we extend our algorithm to sparsify certain general quadratic forms in addition to graph Laplacians. There were only three places in our analysis where we used that $\mb$ was not an arbitrary matrix. First, we needed that $\mb = \ms \mb_n$, where $\bv{B}_n$ is the vertex edge incidence matrix of the unweighted complete graph on $n$ vertices. In other words, we assumed that we had some dictionary matrix $\mb_n$ whose rows encompass every possible row that could arrive in the data stream. In addition to this dictionary assumption, we needed $\bv{B}$ to be sparse and to have a bounded condition number in order to achieve our small space results. These conditions allow our compression to avoid an $\Omega(n^2\plog(n))$ lower bound for approximately solving regression on general $\R^{m \times n}$ matrices in the streaming model \cite{clarkson2009numerical}.

As such, to handle the general `structured matrix' case, we assume that we have some dictionary $\bs{\mathcal{A}} \in \R^{m \times n}$ containing $m$ rows $\bv{a}_i \in \R^{n}$. We assume that $m = O(\poly(n))$. In the dynamic streaming model we receive insertions and deletions of rows from $\bs{\mathcal{A}}$ resulting in a matrix $\ma = \ms \bs{\mathcal{A}}$ where $\ms \in \R^{m \times m}$ is a diagonal matrix such that $\bv{S}_{ii} \in \{0, 1\}$ for all $i  \in [m]$. Our goal is to recover from an $O(n \plog (m))$ space compression a diagonal matrix $\mw$ with at most $O(n \log (n))$ nonzero entries such that $ \bs{\mathcal{A}}^\top \mw^2  \bs{\mathcal{A}} \approx_\epsilon  \bs{\mathcal{A}}^\top \ms^2  \bs{\mathcal{A}} = \bv{A}^\top \bv{A}$. Formally, we prove the following:

\begin{theorem}[Streaming Structured Matrix Sparsification] \label{main_structured_theorem}
Given a row dictionary $\bs{\mathcal{A}}\in\R^{m\times n}$ containing all possible rows of the matrix $\bv{A}$, there
exists an algorithm that, for any $\epsilon>0$, processes a stream
of row insertions and deletions for $\ma$ in a single
pass and maintains a set of linear sketches of this input in $O\left(\frac{1}{\epsilon^{2}}n\plog(m,\kappa_u)\right)$
space where $\kappa_u$ is an upper bound on the condition number of $\ma^\top\ma$. From these
sketches, it is possible to recover, with high probability, a matrix
$\bv{\tilde{A}}^{\top}\bv{\tilde{A}}$ such that $\bv{\tilde{A}}$
contains only $O(\epsilon^{-2}n\log n)$ reweighted rows of $\ma$
and $\bv{\tilde{A}}^{\top}\bv{\tilde{A}}$ is a $(1\pm\epsilon)$ spectral
sparsifier of $\ma^{\top}\ma$. The algorithm recovers $\tilde A$ in $\poly(m,\epsilon, n, \log \kappa_u)$ time.
\end{theorem}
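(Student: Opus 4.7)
The plan is to follow the same blueprint as the graph sparsification proof of Theorem~\ref{main_sparsification_theorem}, generalizing each piece from the graph setting to a general row dictionary $\bs{\mathcal A}$. The chain construction from Theorem~\ref{miller_peng_chain} is already stated for arbitrary PSD matrices, so I would use it to build $\bv{K}(\ell) = \bv{A}^\top\bv{A} + \gamma(\ell)\bv{I}$ for $\ell = 0,\ldots,d$ with $d = \lceil\log_2(\lambda_u/\lambda_l)\rceil = O(\log\kappa_u)$; the coarse base approximation $2\gamma(0)\bv{I}$ is then refined rung by rung into a $(1\pm\epsilon)$ sparsifier of $\bv{A}^\top\bv{A}$ exactly as in the proof of Theorem~\ref{main_sparsification_theorem}. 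The $O(\log\kappa_u)$ length of the chain is what contributes the $\plog(\kappa_u)$ factor to the final space bound.

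Second, I would port $\maintain$ and $\refine$ from Theorem~\ref{refinement} almost verbatim. In $\maintain$, use $O(\log m)$ hash functions $h_s\colon[m]\to\{0,1\}$ to build subsampled matrices $\bv{A}_s$ and store an $\ell_2$ heavy-hitters sketch $\bv{\Pi}_s\bv{A}_s$ from Lemma~\ref{sparse_recovery_primitive} instantiated with $N = m$. In $\refine$, iterate over all $m$ dictionary rows: compute $\tilde\tau_i = \bv{a}_i^\top\bv{\tilde K}^+\bv{a}_i$, pick the sampling level $s(i)$ so that $1/2^{s(i)} \in [\min\{1,p_i\},\min\{1,2p_i\}]$ for $p_i = c_2\tilde\tau_i\log n/\epsilon^2$, evaluate $\bv{\Pi}_{s(i)}\bv{A}_{s(i)}\bv{\tilde K}^+\bv{a}_i$, and run the heavy-hitters decoder on this vector to decide whether $\bv{a}_i$ is present in $\bv{A}$. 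Since $m = \poly(n)$, the union bound across all dictionary candidates costs only a logarithmic factor.

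Third, the only place in the graph argument that genuinely uses graph structure is Lemma~\ref{leverage_score_bound}, whose diagonal-dominance proof of $\tilde\tau_{e',e}\le\tilde\tau_e$ is not available for a general dictionary. I would replace it by Cauchy--Schwarz,
\begin{equation*}
\bigl(\bv{a}_j^\top\bv{\tilde K}^+\bv{a}_i\bigr)^2 \le \bigl(\bv{a}_j^\top\bv{\tilde K}^+\bv{a}_j\bigr)\bigl(\bv{a}_i^\top\bv{\tilde K}^+\bv{a}_i\bigr) = \tilde\tau_j\tilde\tau_i,
\end{equation*}
combined with $\tilde\tau_j \le \tau_j/c \le 1/c$ (using $c\bv{K}\preceq_r\bv{\tilde K}$, together with the convention that the identity rows of $\sqrt{\gamma}\bv{I}$ in $\bv{A}_\gamma$ have leverage score rounded up to $1$). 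This gives the uniform coordinate bound $|\bv{x}_i(j)| \le \sqrt{\tilde\tau_i/c}$, which, in conjunction with $\|\bv{x}_i\|_2^2 \le \tilde\tau_i/c$, drives a Bernstein-style concentration for $\|\bv{x}_i^{s(i)}\|_2^2 = \sum_j Y_j\bv{x}_i(j)^2$, where the $Y_j$ are independent Bernoulli indicators of mean $1/2^{s(i)}$.

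The main obstacle is exactly that the Cauchy--Schwarz bound is weaker than the graph-case bound $|\bv{x}_i(j)| \le \tilde\tau_i$, so the Bernstein exponent for concentration of $\|\bv{x}_i^{s(i)}\|_2^2$ about its expectation $\Theta(\tilde\tau_i^2\log n/\epsilon^2)$ is degraded by a factor involving $\tilde\tau_i$ itself and becomes loose when $\tilde\tau_i$ is small. I would absorb this loss by sharpening the heavy-hitters threshold to $\eta = \Theta(\epsilon/\plog(m,\kappa_u))$, which enlarges each sketch by only a $\plog$ factor and is exactly what is hidden inside the claimed space bound $O(\epsilon^{-2}n\plog(m,\kappa_u))$. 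The remaining pieces---concluding via Lemma~\ref{sparsifier_sampling} that the sampled weights $\bv{W}(i,i) = 2^{s(i)}$ produce $\bv{\tilde A}^\top\bv{\tilde A}$ spectrally approximating $\bv{A}^\top\bv{A}$, extending to weighted streams as in Section~\ref{weighted}, and derandomizing $h_s$ via the pseudorandom generator of Section~\ref{pseudorandomness}---carry over without modification and yield the single-pass algorithm with recovery time $\poly(m,n,1/\epsilon,\log\kappa_u)$.
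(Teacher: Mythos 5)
You correctly identify the one genuinely graph-specific step---Lemma~\ref{leverage_score_bound}'s bound $\tilde\tau_{e',e}\le\tilde\tau_e$, whose proof hinges on diagonal dominance of the Laplacian---and the Cauchy--Schwarz inequality $|\bv{a}_j^\top\bv{\tilde K}^+\bv{a}_i|\le\sqrt{\tilde\tau_j\tilde\tau_i}$ is a valid general-matrix substitute. But the resulting coordinate bound $|\bv{x}_i(j)|\le\sqrt{\tilde\tau_i/c}$ is too weak to carry the rest of the argument, and your proposed patch of shrinking $\eta$ does not repair it.

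Here is the concrete failure. You want $\|\bv{x}_i^{s(i)}\|_2^2$ to concentrate around $\E\|\bv{x}_i^{s(i)}\|_2^2 = \hat p_i\|\bv{x}_i\|_2^2 = \Theta(\tilde\tau_i^2\log n\,\epsilon^{-2})$ with high probability. Take a single row $j$ with $\tilde\tau_j\approx 1$ (always possible for a general dictionary). Cauchy--Schwarz allows $\bv{x}_i(j)^2\approx\tilde\tau_i$, which exceeds the target $\Theta(\tilde\tau_i^2\log n\,\epsilon^{-2})$ by a factor $\Theta(\epsilon^2/(\tilde\tau_i\log n))$---unbounded as $\tilde\tau_i\to 0$. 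That single coordinate is kept in the subsample with probability $\hat p_i = \Theta(\tilde\tau_i\log n\,\epsilon^{-2})$, which is nowhere near small enough to union-bound away, and when it is kept, $\|\bv{x}_i^{s(i)}\|_2^2$ jumps to at least $\tilde\tau_i$. At that point the relevant signal-to-noise ratio is $\bv{x}_i(i)^2/\|\bv{x}_i^{s(i)}\|_2^2 \approx \tilde\tau_i^2/\tilde\tau_i = \tilde\tau_i$, so the heavy-hitters sketch would need $\eta^2\lesssim\tilde\tau_i$---and for a general matrix $\tilde\tau_i$ can be polynomially or even exponentially small, so no $\eta = \Theta(\epsilon/\plog(m,\kappa_u))$ suffices. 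This is exactly what the paper means when it says ``the concentration argument fails,'' and no choice of heavy-hitters threshold that keeps the sketch small can salvage it.

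The paper's actual fix changes the algorithm, not just the analysis: it maintains $T=O(\log m)$ \emph{independent} subsamplings at each rate $1/2^s$, splits $\|\bv{x}_{s_i}^{(t)}\|_2^2$ into the contribution $A_i^{(t)}$ from rows $j$ sampled at the same rate (for which the PSD bound $|\bv{a}_i^\top\bv{\tilde K}^+\bv{a}_j|\le(\tilde\tau_i+\tilde\tau_j)/2 = O(\tilde\tau_i)$ does give a Chernoff-friendly term bound) and the contribution $B_i^{(t)}$ from all other rows (for which only Markov is available, giving a per-copy failure probability $\epsilon/O(\log m)$). Because $B_i^{(t)}$ is independent across copies $t$ and independent of $A_i^{(t)}$ and of the indicator for row $i$, one picks $t_i$ at random and argues stochastic domination: the recovered set $\dist$ is sandwiched between independent Bernoulli processes with rates $(1-\epsilon)/2^{s_i}$ and $1/2^{s_i}$ (Lemma~\ref{matrix_dominance}), and Lemma~\ref{sparsifier_sampling} applies to both ends. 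Your proposal has no analog of the extra independent copies, no $A_i/B_i$ decomposition, and no domination argument, so the ``remaining pieces carry over without modification'' claim does not hold; the core concentration step is missing.
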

Note that, when $m,\kappa_u = O(\poly(n))$, the sketch space is $O\left(\frac{1}{\epsilon^{2}}n\plog(n)\right)$.
To prove Theorem \ref{main_structured_theorem}, we need to introduce a more complicated sampling procedure than what was used for the graph case. In Lemma \ref{leverage_score_bound}, for the correctness proof of $\refine$ in Section \ref{streaming_row_sampling}, we relied on the structure of our graph Laplacian and vertex edge incidence matrix to show that $\tilde \tau_{e',e} \le \tilde \tau_e$. This allowed us to show that the norm of a sampled $\bv{x}_e^{s(e)}$ concentrates around its mean. Thus, we could recover edge $e$ with high probability if it was in fact included in the sampling $\bv{B}_{s(e)}$. Unfortunately, when processing general matrices, $\tilde \tau_e$ is not necessarily the largest element $\bv{x}_e^{s(e)}$ and the concentration argument fails.

We overcome this problem by modifying our algorithm to compute more sketches. Rather than computing a single $\mpi \bv{A}_s$, for every sampling rate $1/2^s$, we compute $O(\log n)$ sketches of different samplings of $\bv{A}$ at rate $1/2^s$. Each sampling is fully independent from the \emph{all} others, including those at the same and different rates. This differs from the graph case, where $\bv{B}_{1/2^{s+1}}$ was always a subsampling of $\bv{B}_{1/2^{s}}$ (for ease of exposition). Our modified set up lets us show that, with high probability, the norm of $\bv{x_i}^{s(i)}$ is close to its expectation for at least a $(1-\epsilon)$ fraction of the independent samplings for rate $s(i)$. We  can recover row $i$ if it is present in one of the `good' samplings. 

Ultimately, we argue, in a similar manner to \cite{KP12}, that we can sample rows according to some distribution that is close to the distribution obtained by independently sampling rows according to leverage score. Using this primitive, we can proceed as in the previous sections to prove Theorem~\ref{main_structured_theorem}. In Section~\ref{sec:sub:gen:row}, we provide the row sampling subroutine and in Section~\ref{sec:sub:gen:sparse}, we show how to use this sampling routine to prove Theorem~\ref{main_structured_theorem}. 

\subsection{Generalized Row Sampling}
\label{sec:sub:gen:row}
Our leverage score sampling algorithm for the streaming model is as follows:

\newcommand{\strlev}{\tilde{\tau}}
\newcommand{\vxst}{\bv{x}_s^{(t)}}
\newcommand{\vxsit}{\bv{x}_{s_i}^{(t)}}
\newcommand{\vxsiti}{\bv{x}_{s_i}^{(t_i)}}
\newcommand{\xsit}{\bv{x}_{s_i}^{(t)}}
\newcommand{\xsiti}{\bv{x}_{s_i}^{(t_i)}}

\begin{framed}{\noindent\bfseries Streaming Row Sampling Algorithm} 

\paragraph{$\maintainma(\bv{A}, \epsilon, \kappa_u, \gamma,c)$:}
\begin{enumerate}

\item Let $S = O(\log \kappa_u)$, $T = O(\log m)$, and for all $s \in [S]$ and $t \in [T]$ let $\bv{F}_s^{(t)} \in \R^{m \times m}$ be a diagonal matrix with $[\bv{F}_s^{(t)}]_{ii} = 1$ independently with probability $\frac{1}{2^s}$ and is $0$ otherwise.\footnotemark
\item For all $s \in [S]$ and $t \in [T]$ maintain sketch $\bv{\Pi}_s^{(t)} \bv{F}_s^{(t)} \bv{A}$ where each $\bv{\Pi}_s^{(t)}$ is drawn independently from the distribution in Lemma~\ref{sparse_recovery_primitive} with $\eta^2 = \frac{1}{C}$ and $C = c_1 \epsilon^{-3} \log m \log n$.
\item Add rows of $\gamma \bv{I}$, independently sampled at rate $\frac{1}{2^s}$ , to each sketch.

\end{enumerate}

\paragraph{$\sample(\bv{\Pi} \ma, \tilde{\mk}, \epsilon,c)$:}
\begin{enumerate}
\item For all $s \in [S]$ and $t \in [T]$ let $\vxst = \bv{F}_s^{(t)} \bv{A} \bv{\tilde K}^+$ and compute $\bv{\Pi}_s^{(t)} \vxst$.

\item For every $i \in [m]$:
\begin{enumerate}[(a)]
\item Compute $\strlev_i = \bv{a}_i^\top \bv{\tilde K}^+ \bv{a}_i$ and $p_i =  c_2\tilde \tau_i \log n \epsilon^{-2}$, where $c_2$ is the oversampling constant from Lemma \ref{sparsifier_sampling}. Choose $s_i$ such that $ \min \{1, p_i \} \le \frac{1}{2^s_i} \le \min \{1, 2p_i \}$.

\item Pick $t_i \in [T]$ uniformly at random and use Lemma~\ref{sparse_recovery_primitive} to check if $\bv{x}_{s_i}^{(t_i)}(i)^2 \geq C^{-1}\norm{\bv{x}_{s_i}^{(t_i)}}_2^2$.

\item If $i$ is recovered, add row $i$ to the set of sampled edges with weight $2^{s_i}$.
\end{enumerate}

\end{enumerate}

\end{framed}\footnotetext{Throughout this section, for $X \in \mathbb{Z}^+$ we let $[X] = \{0, 1, 2, \ldots, X\}$}

We claim that, with high probability, the set of edges returned by the above algorithm is a random variable that is stochastically dominated by the two random variables obtained by sampling edges independently at rates $p_i$ and $(1-\epsilon)p_i$, respectively.

The following property of PSD matrices is used in our proof of correctness:
\begin{lemma}
\label{lem:gen:psd_fact}
For any symmetric PSD matrix $\mk \in \R^{n \times n}$ and indices $i ,j \in [n]$ we have
\[
\left| \mk_{ij} \right| \leq \frac{1}{2} \left(\mk_{ii} + \mk_{jj}\right).
\]
\end{lemma}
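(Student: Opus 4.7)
The plan is to use the definition of positive semidefiniteness applied to carefully chosen test vectors, specifically the signed sum of two standard basis vectors. This is a short, standard fact, so the proof should be a few lines with no real obstacle.

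First I would recall that since $\mk$ is PSD, we have $\bv{v}^\top \mk \bv{v} \geq 0$ for every $\bv{v} \in \R^n$. I would then apply this with $\bv{v} = \bv{e}_i + \bv{e}_j$ and with $\bv{v} = \bv{e}_i - \bv{e}_j$, where $\bv{e}_k$ denotes the $k$-th standard basis vector. Expanding each quadratic form, using symmetry of $\mk$, yields
\begin{equation*}
(\bv{e}_i \pm \bv{e}_j)^\top \mk (\bv{e}_i \pm \bv{e}_j) = \mk_{ii} \pm 2\mk_{ij} + \mk_{jj} \geq 0.
\end{equation*}
Rearranging the two resulting inequalities gives $-2\mk_{ij} \leq \mk_{ii} + \mk_{jj}$ and $2\mk_{ij} \leq \mk_{ii} + \mk_{jj}$, so $2|\mk_{ij}| \leq \mk_{ii} + \mk_{jj}$, which is exactly the claim.

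As an alternative (equally short) route, I could note that the $2 \times 2$ principal submatrix of $\mk$ indexed by $\{i,j\}$ is itself PSD, hence has nonnegative determinant, so $\mk_{ij}^2 \leq \mk_{ii} \mk_{jj}$, and then apply AM-GM: $\sqrt{\mk_{ii}\mk_{jj}} \leq \tfrac{1}{2}(\mk_{ii} + \mk_{jj})$. I would favor the first approach since it avoids invoking the principal submatrix fact and is self-contained. There is no real obstacle here — the statement is essentially a restatement of PSD positivity on $\bv{e}_i \pm \bv{e}_j$.
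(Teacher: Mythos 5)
Your proof is correct and takes exactly the same approach as the paper: apply the PSD quadratic form to $\bv{e}_i + \bv{e}_j$ and $\bv{e}_i - \bv{e}_j$, expand using symmetry, and combine the two resulting inequalities. The paper writes the basis vectors as $\bv{1}_i, \bv{1}_j$ but the argument is identical.
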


\begin{proof}
Let $\bv{1}_i$ be the vector with a $1$ at position $i$ and $0$s else where. For all $i, j \in [n]$ by the fact that $\mk$ is PSD we have that
\[
\left(\bv{1}_i - \bv{1}_j\right) \mk \left(\bv{1}_i - \bv{1}_j\right)
\geq 0
\enspace \text{ and } \enspace
\left(\bv{1}_i + \bv{1}_j\right) \mk \left(\bv{1}_i + \bv{1}_j\right)
\geq 0.
\]
Expanding, we have that:
\[
-\mk_{ii} - \mk_{jj}
\leq 2 \mk_{ij}
\leq \mk_{ii} + \mk_{jj},
\]
yielding the result.
\end{proof}

We can now proceed to prove that our sampling procedure approximates sampling the rows of $\bv{A}$ by their leverage scores.
\begin{lemma}\label{matrix_dominance} Consider an execution of $\sample(\bv{\Pi} \ma, \tilde{\mk}, c, \epsilon)$ where
\begin{itemize}
  \item $c\ma^\top \ma \preceq \tilde{\mk} \preceq \ma^\top \ma$ for $c \in (0, 1]$, and
  \item $\epsilon \in (0, 1]$.
\end{itemize}
Let $\dist$ be a random variable for the indices returned by $\sample(\bv{\Pi} \ma, \tilde{\mk}, c, \epsilon)$. Let $\mathcal{I} \subseteq [m]$ denote the indices of the nonzero rows of $\ma$ and let $\dist_r$ and $\dist_q$ be random variables for the subset of $[m]$ obtained by including each $i \in \mathcal{I}$ independently with probability
\[
r_i = (1 - \epsilon) \frac{1}{2^{s_i}}
~ \text{and} ~
q_i = \frac{1}{2^{s_i}} .
\]
With high probability, i.e. except for a $(1-\frac{1}{m^{O(1)}})$ fraction of the probability space, $\dist$ is stochastically dominated by $\dist_q$ and $\dist$ stochastically dominates $\dist_r$ with respect to set inclusion.
\end{lemma}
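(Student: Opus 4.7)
The proof splits into the easy upper bound $\dist \subseteq \dist_q$ and the more delicate lower bound $\dist_r \subseteq \dist$ on a high-probability event. For the upper bound, observe that step~2(b) of $\sample$ can only return $i$ when the coordinate $(\vxsiti)_i = [\bv F_{s_i}^{(t_i)}]_{ii}\cdot \tilde\tau_i$ is non-zero, i.e.\ when $[\bv F_{s_i}^{(t_i)}]_{ii} = 1$. Setting $\dist_q := \{i\in\mathcal I : [\bv F_{s_i}^{(t_i)}]_{ii} = 1\}$ thus gives $\dist\subseteq\dist_q$ deterministically, and the marginal law of $\dist_q$ is a product of independent Bernoullis$(q_i)$ because the diagonal entries of the $\bv F_s^{(t)}$'s are independent across $(s,t,i)$ and the $t_i$'s are independent across $i$.

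For the lower bound, the core calculation controls the second moment of $\vxsit$. Set $\bv M := \ma\bv{\tilde K}^+\ma^\top$ (PSD); its $(j,i)$ entry is $M_{ji} = \bv a_j^\top\bv{\tilde K}^+\bv a_i$, so $\vxsit_j = [\bv F_{s_i}^{(t)}]_{jj}\cdot M_{ji}$ and therefore $\E\|\vxsit\|_2^2 = q_i (\bv M^2)_{ii}$. The hypothesis $c\,\ma^\top\ma\preceq\bv{\tilde K}$ implies $\bv M \preceq (1/c)\bv P$, where $\bv P$ projects onto the column span of $\ma$, so $\|\bv M\|_{op}\le 1/c$; since $\bv M$ is PSD, $\bv M^2 \preceq (1/c)\bv M$, and hence $(\bv M^2)_{ii}\le \tilde\tau_i/c$. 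Combining with $q_i \le 2p_i = O(\tilde\tau_i\log n/\epsilon^2)$ yields $\E\|\vxsit\|_2^2 = O(\tilde\tau_i^2\log n/(c\epsilon^2))$; the same bound persists after conditioning on $[\bv F_{s_i}^{(t)}]_{ii}=1$ (just an extra $\tilde\tau_i^2$ term). Markov's inequality then shows each fixed $t$ is ``norm-good'' for $i$ (meaning $\|\vxsit\|_2^2 \le C\tilde\tau_i^2/4$) with probability at least $1-\epsilon/2$, taking $c_1$ in $C=c_1\epsilon^{-3}\log m\log n$ large enough. When $[\bv F_{s_i}^{(t_i)}]_{ii}=1$ and $t_i$ is norm-good for $i$, a direct application of Lemma~\ref{sparse_recovery_primitive} with $\eta^2=1/C$ shows the heavy-hitter test returns $i$, while if $[\bv F_{s_i}^{(t_i)}]_{ii}=0$ the same lemma shows the test rejects $i$, preserving $\dist\subseteq\dist_q$.

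Now define $\mathcal G_i := \{t\in[T] : [\bv F_{s_i}^{(t)}]_{ii}=1 \text{ and } t \text{ is norm-good for } i\}$, so $i\in\dist$ iff $t_i\in\mathcal G_i$. Build the coupling by introducing independent $\mathrm{Uniform}[0,1]$ variables $U_i$, imposing any ordering of $[T]$ that lists the elements of $\mathcal G_i$ first, and taking $t_i$ to be the $\lceil U_iT\rceil$-th index in this ordering. Marginally $t_i$ is uniform on $[T]$ and, integrating out $\bv F$, the $t_i$'s are jointly independent, so the coupling preserves the algorithm's joint law. Put $\dist_r := \{i\in\mathcal I : U_i < r_i\}$, which is an independent product of Bernoullis$(r_i)$ by construction, and note that $i\in\dist$ iff $U_i < |\mathcal G_i|/T$. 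Thus on the event $E := \{|\mathcal G_i|/T\ge r_i \text{ for every } i\in\mathcal I\}$, $\dist_r\subseteq\dist$ almost surely. Since $\Pr[t\in\mathcal G_i]\ge q_i(1-\epsilon/2) = r_i(1+\Omega(\epsilon))$ and the $T$ copies are independent, Chernoff plus a union bound over $i$ give $\Pr[E]\ge 1 - m^{-\Omega(1)}$.

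The main obstacle is that Chernoff on $|\mathcal G_i|/T$ only delivers $m^{-\Omega(1)}$ concentration when $Tq_i\gtrsim \log m/\epsilon^2$, which can fail for indices with very small $q_i$. I plan to address this either by inflating $T$ to $\Theta(\log m/\epsilon^2)$ (absorbed by the polylogarithmic factor in the space bound of Theorem~\ref{main_structured_theorem}), or by bounding $\Pr[\dist_r\not\subseteq\dist]\le \sum_i r_i\Pr[p_i(\bv F)<r_i]$ directly; for indices with small $q_i$ the prefactor $r_i$ is itself tiny, and $\sum_i r_i\le \sum_i q_i = O(n\log n/\epsilon^2)$ combined with a variance estimate on $p_i(\bv F) = |\mathcal G_i|/T$ keeps the total below $m^{-\Omega(1)}$.
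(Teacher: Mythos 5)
Your upper bound ($\dist\subseteq\dist_q$), the second-moment calculation $\E\|\vxsit\|_2^2 \le q_i\tilde\tau_i/c$, and the coupling via uniform variables $U_i$ are all sound and mirror pieces of the paper's argument. However, the gap you flag at the end is real, and neither of your proposed repairs closes it, because you are missing the structural decomposition that the paper's proof hinges on.

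The quantity you need to concentrate, $|\mathcal G_i|/T$, is a sum of $T$ independent Bernoullis with success probability roughly $q_i$. For indices $i$ with very small leverage score, $q_i$ can be as small as $\poly(1/\kappa_u)$ (or $\poly(1/n)$ in the graph specialization), so $T q_i$ can be far below $1$ regardless of any polylogarithmic inflation of $T$ --- inflating $T$ to $\Theta(\epsilon^{-2}\log m)$ does not help; you would need $T = \Omega(\epsilon^{-2}\log m / q_{\min})$, which is polynomially large. Your second fix fails for the same reason: when $q_iT<1$, the event $|\mathcal G_i|=0$ has probability close to $1$, so $\Pr[|\mathcal G_i|/T < r_i]$ is $\Theta(1)$, and $\sum_i r_i\Pr[\cdot]$ is bounded only by $\sum_i r_i = \Theta(n\log n/\epsilon^2)$, which is nowhere near $m^{-\Omega(1)}$. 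The ``variance estimate'' you gesture at cannot rescue this, since the problem is not the variance of $|\mathcal G_i|/T$ but the fact that its support includes $0$ with high probability.

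The paper avoids this entirely by \emph{decomposing} $\norm{\vxsit}_2^2$ rather than treating it monolithically. Writing $\mathcal K_i = \{j\in\mathcal I : s_j = s_i\}$, split the norm as $A_i^{(t)} = \sum_{j\in\mathcal K_i}\vxsit(j)^2$ and $B_i^{(t)} = \sum_{j\notin\mathcal K_i}\vxsit(j)^2$. For the same-rate part $A_i^{(t)}$, Lemma \ref{lem:gen:psd_fact} gives the per-entry bound $|M_{ji}| \le \frac{1}{2}(\tilde\tau_i+\tilde\tau_j) \le \frac{3}{2}\tilde\tau_i$ since $\tilde\tau_j \le 2\tilde\tau_i$ for $j\in\mathcal K_i$; thus every term of $A_i^{(t)}$ is an $O(\epsilon^2/\log n)$ fraction of $\E\norm{\vxsit}_2^2$, so a bounded-increment Chernoff bound controls $A_i^{(t)}$ for \emph{all} $i$ and $t$ simultaneously with probability $1-m^{-\Omega(1)}$, regardless of how small $q_i$ is. For the cross-rate part $B_i^{(t)}$, only Markov is available, giving a per-$t$ failure probability of $O(\epsilon/\log m)$; but the crucial observation is that $B_i^{(t)}$ depends on diagonal entries $[\bv F_{s_i}^{(t)}]_{jj}$ for $j\notin\mathcal K_i$, which are \emph{disjoint} from, and hence independent of, both $[\bv F_{s_i}^{(t)}]_{ii}$ and $A_i^{(t)}$. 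This independence is what lets the argument multiply the sampling probability $q_i$ by the (independent) success probability $1-\epsilon$ without ever needing to Chernoff-concentrate the count of $t$'s for which both $[\bv F_{s_i}^{(t)}]_{ii}=1$ and the norm is small. By bundling those two events into the single quantity $|\mathcal G_i|$ and then demanding its concentration, your argument couples them and inherits the tiny success probability $q_i$ into the Chernoff denominator --- precisely the obstruction the paper's split is designed to dodge. Without the $A$/$B$ decomposition and the independence of $B_i^{(t)}$ from the row-$i$ sampling event, the stochastic domination by $\dist_r$ does not follow.
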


\begin{proof}
By definition, $\dist_r$ and $\dist_q$ are always subsets of $\mathcal{I}$ and $\dist$ is a subset of $\mathcal{I}$ with high probability (it is a subset as long as the algorithm of Lemma~\ref{sparse_recovery_primitive} succeeds). Thus it remains to show that, with high probability for each  $\mathcal{J} \subseteq \mathcal{I}$,
\[
\prod_{i \in \mathcal{J}} r_i
= \Pr[\mathcal{J} \subseteq \dist_r]
\leq \Pr[\mathcal{J} \subseteq \dist]
\leq \Pr[\mathcal{J} \subseteq \dist_q]
= \prod_{i \in \mathcal{J}} q_i
.
\]
    Furthermore, by definition, with high probability, $\sample$ outputs $i\in \mathcal{I}$ if and only if $\xsiti(i)^2 \geq C^{-1} \norm{\vxsiti}_2^2$ and consequently
\begin{equation}
\label{eq:gen:sample:1}
\Pr[\mathcal{J} \subseteq \dist]
= \Pr\left[\forall i \in \mathcal{J} ~ : ~ 
\xsiti(i)^2 \geq C^{-1} \norm{\vxsiti}_2^2
\right].
\end{equation}
As shown in Equation \ref{entry_is_lev_score}, when proving our graph sampling Lemma, for all $i \in \mathcal{J}$,
\[
\xsiti(i)
=  [\bv{F}_{s_i}^{(t_i)}]_{ii} \cdot \tilde{\tau}_i .
\]
Consequently, by the definition of $[\bv{F}_{s_i}^{(t_i)}]_{ii}$ we can rewrite \eqref{eq:gen:sample:1} as:
\begin{equation}
\label{eq:gen:sample:2}
\Pr[\mathcal{J} \subseteq \dist]
= \Pr\left[\forall i \in \mathcal{J} ~ : ~
\norm{\vxsiti}_2^2 \leq C \cdot \tilde{\tau}_i^2
~ \text{and} ~
[\bv{F}_{s_i}^{(t_i)}]_{ii} = 1
\right].
\end{equation}
From \eqref{eq:gen:sample:2} and the independence of $[\bv{F}_{s_i}^{(t_i)}]_{ii}$ we obtain the following trivial upper bound on $\Pr[\mathcal{J} \subseteq \dist]$,
\[
\Pr[\mathcal{J} \subseteq \dist]
\leq
\Pr\left[\forall i \in \mathcal{J} ~ : ~ [\bv{F}_{s_i}^{(t_i)}]_{ii} = 1\right]
= \prod_{i \in \mathcal{J}} \frac{1}{2^{s_i}}
= \prod_{i \in \mathcal{J}} q_i
\]
and consequently $\dist$ is stochastically dominated by $\dist_q$ as desired.

As shown in Equation \ref{real_eqn}, when proving the graph sampling case, for all $i \in \mathcal{I}$ and $t \in [T]$
\begin{equation}
\label{eq:gen:sample:3}
p_i \tilde \tau_i \leq \E\left[\norm{\vxsit}_2^2\right]
\leq 
\frac{2}{c} p_i \tilde \tau_i .
\end{equation}
Recalling that $p_i =  c_2\tilde \tau_i \log n \epsilon^{-2}$, combining \eqref{eq:gen:sample:2} and \eqref{eq:gen:sample:3} yields:
\begin{equation}
\label{eq:gen:sample:4}
\Pr[\mathcal{J} \subseteq \dist]
\geq \Pr\left[\forall i \in \mathcal{J} ~ : ~
\norm{\vxsiti}_2^2 \leq c_3 \log m \epsilon^{-1} \cdot \E[\norm{\vxsiti}_2^2]
\text{ and }
[\bv{F}_{s_i}^{(t_i)}]_{ii} = 1
\right],
\end{equation}
where $c_3 = c_1c /2c_2$.

To bound the probability that $\norm{\vxsiti}_2^2 \leq c_3 \log m \epsilon^{-1}  \cdot \E[\norm{\vxsiti}_2^2]$ we break the contribution to $\norm{\vxsit}_2^2$ for each $t$ into two parts. For all $i$ we let $\mathcal{K}_i = \{j \in \mathcal{I} | s_j = s_i\}$, i.e. the set of all rows $j$ which we attempt to recover at the same sampling rate as $i$. For any $t \in [T]$, we let $A_i^{(t)} = \sum_{j \in \mathcal{K}} x_i^{(t)}(j)^2$ and  $B_i^{(t)} = \sum_{j \in \mathcal{I} - \mathcal{K}} x_i^{(t)}(j)^2$. Using this notation and $\eqref{eq:gen:sample:4}$ we obtain the following lower bound
\[
\Pr[\mathcal{J} \subseteq \dist]
\geq \Pr\left[\forall i \in \mathcal{J} ~ : ~
A_i^{(t_i)} \leq \frac{c_3 \log m \epsilon^{-1}}{2} \cdot \E[\norm{\vxsiti}_2^2] ~ \text{,} ~
B_i^{(t_i)} \leq \frac{c_3 \log m \epsilon^{-1}}{2} \cdot \E[\norm{\vxsiti}_2^2] ~ \text{,} ~
\text{and} ~
[\bv{F}_{s_i}^{(t_i)}]_{ii} = 1
\right].
\]
For all $j \in \mathcal{K}_i$, the rows that we attempt to recover at the same rate as row $i$, we know that $\tilde{\tau}_j \leq 2 \tilde{\tau}_i$. By Lemma~\ref{lem:gen:psd_fact} we know that for all $i \in \mathcal{I}$ with $s_i \geq 1$ and $j \in \mathcal{K}_i$
\begin{equation}
\label{eq:gen:sample:6}
x_{s_i}^{(t)}(j)^2
= [\bv{F}_{s_i}^{(t)}]_{jj} \cdot \left| \bv{a}_i^\top \tilde{\mk}^+ \bv{a}_j \right|^2
\leq 1 \cdot \left(\frac{\tilde{\tau}_i + \tilde{\tau}_j}{2}\right)^2
\leq \left(\frac{\tilde{\tau}_i + 2 \tilde{\tau}_i}{2}\right)^2
\leq 3c_2^{-1}\epsilon^2 \log^{-1} n  \E[\norm{\bv{x}_{s_i}^{(t)}}_2^2]
\enspace.
\end{equation}

Now recall that $C = \Omega(\epsilon^{-2} \log n)$. If $\strlev_i > 1/2$ and therefore $s_i = 0$ then $\xsiti(i)^2 = \strlev_i^2$ and setting constants high enough and considering $\eqref{eq:gen:sample:3}$, we see that row $i$ is output with high probability. On the other hand if $s_i \geq 1$, then by \eqref{eq:gen:sample:6} and Chernoff bound choosing a sufficiently large constant we can ensure that with high probability $A_i^{(t)} \leq \frac{c_3 \log m \epsilon^{-1}}{2} \E[\norm{\bv{x}_{s_i}^{(t)}}_2^2]$ for all $i$ and $t$. 

Furthermore, by \eqref{eq:gen:sample:3} and Markov bound we know that $\Pr[B_i^{(t_i)} > \frac{c_3 \log m \epsilon^{-1}}{2}\E[\norm{\bv{x}_{s_i}^{(t)}}_2^2] \leq \frac{\epsilon}{O(\log m)}$. Therefore, by Chernoff bound, with high probability for each $i \in \mathcal{J}$ with $s_i \geq 1$ for at least a $1 - \epsilon$ fraction of the values of $t \in T$ we have $B_i^{(t_i)} \leq \frac{c_3 \log m \epsilon^{-1}}{2}\E[\norm{\bv{x}_{s_i}^{(t)}}_2^2]$. However, note that by construction all the $B_i^{(t)}$ are mutually independent of the $A_i^{(t)}$ and the values of $[\bv{F}_{s_j}^{(t)}]_{jj}$ for $j \in K_i$. So, $\sample$ is simply picking each row $i$ with probability $\frac{1}{2^{s_i}}$ (failing with only a $\frac{1}{m^{O(1)}}$ probability) or not being able to recover each edge independently with some probability at most $\epsilon$ -- the probability that $B_i^{(t_i)}$ is too large. Consequently,  except for a negligible fraction of the probability space we have that
\[
\Pr\left[\mathcal{J} \subseteq \dist\right]
\geq \prod_{i \in \mathcal{J}} (1 - \epsilon) \cdot [\mf_{s_i}^{t}]_{ii}
= \prod_{i \in \mathcal{J}} \frac{1 -\epsilon}{2^{s_i}}
= \prod_{i \in \mathcal{J}} r_i
\]
and we have the desired result.
\end{proof}

\subsection{Generalized Recursive Sparsification}
\label{sec:sub:gen:sparse}

Next we show how to construct a spectral sparsifier in the streaming model for a general
structured matrix using the row sampling subroutine, $\sample$.
In the graph case, Theorem \ref{main_sparsification_theorem} shows
that, if we can find a sparsifier to a graph $G$ using a coarse sparsifier,
then we can use the chain of spectrally similar graphs provided
in Theorem \ref{miller_peng_chain} to find a final $(1 \pm \epsilon)$ sparsifier for our input graph.

The proof of Theorem \ref{main_sparsification_theorem} includes our third reliance on the fact that we are sparsifying graphs -- we claim that the condition number of an unweighted
graph is polynomial in $n$. This fact does not hold in the general matrix case since the condition number can be exponentially large even for bounded integer matrices. Therefore,
our result for general matrix depends on the condition number of $\ma$. 

\begin{theorem}\label{refinement_unweighted_matrix} Given a row
dictionary $\bs{\mathcal{A}}\in\R^{m\times n}$. Let $\ma=\ms\bs{\mathcal{A}}$
be the matrix specified by an insertion-deletion stream where $\ms\in\R^{m\times m}$
is a diagonal matrix such that $\bv{S}_{ii} \in\{0,1\}$ for all $i\in[m]$. Let $\kappa_u$ be a given upper bound on the possible condition number of any $\bv{A}$.
Let $\gamma$ be a fixed parameter and consider $\bv{K}=\ma^{\top}\ma+\gamma\bv{I}$.
For any $\epsilon>0$, there exists a sketching procedure $\maintainma(\bv{A},\epsilon, \gamma, \kappa_u, c)$ that outputs an $O(n \plog (m, \kappa_u))$ sized sketch $\bv{\Pi}\bv{A}$. There exists a corresponding recovery algorithm \texttt{RefineMatrixSparsifier} such that if $c\bv{K} \preceq\bv{\tilde{K}} \preceq \bv{K}$ for some $0 < c < 1$ then:\\

$\texttt{RefineMatrixSparsifier}(\bv{\Pi}\ma,\bv{\tilde{K}},\epsilon,c)$
returns, with high probability, $\bv{\tilde{K}}_{\epsilon}=\bv{\tilde{A}}_{\epsilon}^{\top}\bv{\tilde{A}}_{\epsilon} + \gamma \bv{I}$,
where $(1-\epsilon)\bv{K}\preceq_{r}\bv{\tilde{K}}_{\epsilon}\preceq_{r}(1+\epsilon)\bv{K}$,
and $\bv{\tilde{A}}_{\epsilon}$ contains only $O(\epsilon^{-2}n\log n)$
reweighted rows of $\ma$ with high probability. 
\end{theorem}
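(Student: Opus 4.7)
The plan is to mimic the proof of Theorem~\ref{refinement}, replacing the graph-specific sampling inside $\refine$ with the general row-sampling primitive $\sample$ from Section~\ref{sec:sub:gen:row}. As before, set $\bv{A}_\gamma = \bv{A} \oplus \sqrt{\gamma}\,\bv{I}$ so that $\bv{K} = \bv{A}_\gamma^\top \bv{A}_\gamma$. Because $c\bv{K} \preceq \bv{\tilde K} \preceq \bv{K}$, the quantities $\tilde{\tau}_i = \bv{a}_i^\top \bv{\tilde K}^+ \bv{a}_i$ satisfy $\tau_i \le \tilde{\tau}_i \le c^{-1}\tau_i$ and $\norm{\bs{\tilde\tau}}_1 \le c^{-1} n + n$, so they are valid leverage-score overestimates. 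Were exact independent sampling available at rate $p_i = c_2\tilde{\tau}_i \log n\,\epsilon^{-2}$, Lemma~\ref{sparsifier_sampling} would immediately yield the claimed $(1\pm\epsilon)$ spectral sparsifier with $O(c^{-1}\epsilon^{-2}n\log n)$ rows. The identity portion of $\bv{A}_\gamma$ is treated as in the graph case: always include those $n$ rows (this inflates $\norm{\bs{\tilde\tau}}_1$ by at most $n$), and add $\gamma \bv{I}$ back explicitly to form $\bv{\tilde K}_\epsilon$.

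I would run $\maintainma(\bv{A},\epsilon,\kappa_u,\gamma,c)$ to build the sketch in the stream and then invoke $\sample(\bv{\Pi}\bv{A},\bv{\tilde K},\epsilon,c)$ at recovery time, weighting each returned row $i$ by $2^{s_i} = 1/q_i$ with $q_i \in [\min\{1,p_i\},\min\{1,2p_i\}]$. By Lemma~\ref{matrix_dominance}, except on a $1/\poly(m)$ bad event, the distribution $\dist$ of returned indices is stochastically sandwiched between two independent Bernoulli samplings $\dist_r$ and $\dist_q$ with per-row inclusion probabilities $(1-\epsilon) q_i$ and $q_i$, respectively. The main obstacle is precisely that $\sample$ does not produce an honest independent sample, so Lemma~\ref{sparsifier_sampling} cannot be invoked directly on $\dist$. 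To sidestep this, I would use Strassen's theorem to couple $\dist_r,\dist,\dist_q$ on a common probability space so that $\dist_r \subseteq \dist \subseteq \dist_q$ deterministically with probability $1 - 1/\poly(m)$. Since every row carries the same weight $1/q_i$ in all three cases, set containment lifts to the PSD ordering
\[
\bv{A}^\top \bv{W}_{\dist_r}\bv{A} + \gamma \bv{I}
\ \preceq\ \bv{\tilde K}_\epsilon
\ \preceq\ \bv{A}^\top \bv{W}_{\dist_q}\bv{A} + \gamma \bv{I}.
\]
Lemma~\ref{sparsifier_sampling} now applies to each envelope: the upper envelope is an independent sample with expectation $\bv{K}$ and concentrates to give $\bv{\tilde K}_\epsilon \preceq_r (1+\epsilon)\bv{K}$, while the lower envelope has expectation $(1-\epsilon)\bv{A}^\top\bv{A} + \gamma\bv{I} \succeq (1-\epsilon)\bv{K}$ and concentrates to give $\bv{\tilde K}_\epsilon \succeq_r (1-O(\epsilon))\bv{K}$. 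A constant rescaling of $\epsilon$ then yields the desired $(1\pm\epsilon)$ guarantee.

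The remaining claims follow routinely. Stochastic dominance gives $|\dist| \le |\dist_q|$, which Lemma~\ref{sparsifier_sampling} bounds by $O(\norm{\bs{\tilde\tau}}_1 \log n\,\epsilon^{-2}) = O(c^{-1} n \log n\,\epsilon^{-2})$ with high probability, so $\bv{\tilde A}_\epsilon$ has the claimed number of reweighted rows of $\bv{A}$. The space bound comes directly from $\maintainma$: we maintain $O(\log m \log \kappa_u)$ independent heavy-hitters sketches from Lemma~\ref{sparse_recovery_primitive}, each of size $O(n\plog(m))$. Finally, recovery iterates over the $m$ dictionary rows and performs polylogarithmically many symmetric-diagonally-dominant solves against $\bv{\tilde K}$ per row, for $\poly(m,n,\epsilon^{-1},\log\kappa_u)$ total time.
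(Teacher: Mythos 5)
Your proof is correct and follows the same route as the paper: treat the $\gamma\bv{I}$ rows as sampled with probability one, invoke Lemma~\ref{matrix_dominance} for the stochastic sandwich between $\dist_r$ and $\dist_q$, and conclude via the fact that a process squeezed between two independent samplings that each yield sparsifiers itself yields a sparsifier. The paper disposes of that last step with a citation to~\cite{KP12}, whereas you make it explicit by invoking Strassen's theorem to realize the dominance as monotone couplings and then applying Lemma~\ref{sparsifier_sampling} one-sidedly to each envelope (with a constant rescaling of $\epsilon$ for the $(1-\epsilon)$ deflation in $\dist_r$) — a useful elaboration, but not a different argument.
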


\begin{proof} As in the graph case, we can think of the identity $\gamma\bv{I}$ as a set of rows that we sample with probability
$1$. Hence, we have $\bv{\tilde{K}}_{\epsilon}=\bv{\tilde{A}}_{\epsilon}^{\top}\bv{\tilde{A}}_{\epsilon}+\gamma{\bv{I}}$.

Lemma \ref{matrix_dominance} shows that $\sample(\bv{\Pi} \ma,\tilde{\mk},c,\epsilon)$
returns a random set of indices of $\bs{\mathcal{A}}$ such that the generated random variable is
dominated by $\dist_{q}$ and is stochastically dominates $\dist_{r}$. Recall that  $\dist_{r}$ and $\dist_{q}$ are random variables for the subset
of $[m]$ obtained by including each $i\in\mathcal{I}$ independently
with probability 
\[
r_{i}=(1-\epsilon)\frac{1}{2^{s_{i}}}~\text{and}~q_{i}=\frac{1}{2^{s_{i}}}.
\]
Since $\frac{1}{2^{s_{i}}}$ is a constant factor approximation of
leverages score, Lemma \ref{sparsifier_sampling} shows that sampling and reweighing the rows according
to $\dist_{r}$ gives a spectral
sparsifier of $\bv{K}$ with the guarantee required. Similarly, sampling
according to $\dist_{q}$ gives a sparsifier. Since the indices
returned by \textbf{$\sample(\ma,\tilde{\mk},c,\epsilon)$} are sandwiched
between two processes which each give spectral sparsifiers, sampling according
to $\sample$ gives
the required spectral sparsifier \cite{KP12}.
\end{proof}

Using \texttt{RefineMatrixSparsifier}, the arguments in Theorem \ref{main_sparsification_theorem}
yield Theorem \ref{main_structured_theorem}. Our sketch size needs to be based on $\log \kappa_u$ for two reasons -- we must subsample the matrix at $O(\log \kappa_u)$ different rates as our leverage scores will be lower bounded by some $\poly(\kappa_u)$. Further the chain of recursive sparsifiers presented in Theorem \ref{miller_peng_chain} will have length $\log \kappa_u$. Recovery will run in time $\poly(m,n,\epsilon, \log \kappa_u)$. Space usage will depend on the sparsity of the rows in $\bv{A}$ as we will need enough space to solve linear systems in $\bv{\tilde K}$. In the worst case, this will require $O(n^2)$ space, however, if the row of $\bv{A}$ are sparse, and hence $\bv{\tilde K}$ is sparse, recovery will take less space, specifically $O(n \plog(m))$ with constant row sparsity.

\section{Using a Pseudorandom Number Generator}\label{pseudorandomness}
In the proof of our sketching algorithm, Theorem \ref{refinement}, we assume that $\maintain$ has access to $O(\log n)$ uniform random hash functions, $h_1,\ldots,h_{O(\log n)}$ mapping every edge to $\{0,1\}$. These functions are used to subsample our vertex edge incidence matrix, $\bv{B}$, at geometrically decreasing rates. Storing the functions as described would require $O(n^2\log n)$ space - we need $O(\log n)$ random bits for each possible edge. 

To achieve $O(n\plog(n))$ space, we need to compress the hash functions using Nisan's pseudorandom number generator. Our approach follows an argument in \cite{gssss} (Section 3.4) that was originally introduced in \cite{indyk2000stable} (Section 3.3). First, we summarize the pseudorandom number generator from \cite{nisan1992pseudorandom}

\begin{theorem}[Corollary 1 in \cite{nisan1992pseudorandom}]\label{pseudorandomPrimitive}
Any randomized algorithm running in $space(S)$ and using $R$ random bits may be converted to one that uses only $O(S\log R)$ random bits (and runs in space $O(S\log R)$).
\end{theorem}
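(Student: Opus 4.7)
The plan is to prove Nisan's theorem by constructing the generator recursively from a pairwise-independent hash family and analyzing it via a hybrid argument against small-width branching programs. I model the given space-$S$ randomized algorithm as a layered branching program of width $W = 2^{O(S)}$ that reads its $R$ random bits in $T = R/m$ blocks of size $m = \Theta(S)$ (assume $T$ is a power of two). Let $\mathcal{H}$ be a pairwise-independent hash family on $\{0,1\}^m \to \{0,1\}^m$, with each hash function described in $O(S)$ bits. The generator takes a seed consisting of $x \in \{0,1\}^m$ together with $\log T$ hash functions $h_1, \ldots, h_{\log T}$, and outputs $G_{\log T}(x; h_1, \ldots, h_{\log T})$ where $G_0(x) = x$ and
\[
G_i(x; h_1, \ldots, h_i) = G_{i-1}(x; h_1, \ldots, h_{i-1}) \circ G_{i-1}(h_i(x); h_1, \ldots, h_{i-1}).
\]
The output has length $m \cdot 2^{\log T} = R$ while the total seed length is $m + \log T \cdot O(S) = O(S \log R)$, matching the theorem.

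The analysis is a hybrid argument across the $\log T$ recursion levels. Define intermediate distributions $H_i$ where the first $2^i$ blocks of $m$ bits come from $G_i$ and the remaining $T - 2^i$ blocks are truly uniform and independent. Then $H_0$ is uniform on $\{0,1\}^R$ and $H_{\log T}$ is the generator's output. It suffices to bound the total-variation distance of the branching program's output between $H_{i-1}$ and $H_i$ by $2^{-\Omega(S)}$ at each level and union-bound over the $O(\log R)$ levels. The one-step fooling lemma, which is the heart of the argument, asserts that for any starting state distribution the $W \times W$ transition matrix produced by two fresh uniform length-$m$ blocks is close (in the relevant $\ell_2$ sense) to the matrix produced by $(x, h(x))$ with $x$ uniform and $h \in \mathcal{H}$ pairwise independent. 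This holds because after the first block the program retains only $\log W = O(S)$ bits about $x$, and pairwise independence of $h$ is exactly strong enough to control the second moments of the resulting joint state distribution; choosing $m$ a large enough constant multiple of $S$ drives the error per level to $2^{-\Omega(S)}$.

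The main obstacle is precisely this $\ell_2$ fooling lemma: pairwise independence is far weaker than full independence, so the analysis has to exploit the small-memory structure of the branching program carefully, typically by expanding the squared norm of the state distribution vector and using pairwise independence of $h$ to show that the cross terms between the two halves cancel up to an additive $W \cdot 2^{-\Omega(m)}$ correction. Once this single-level bound is in hand, the recursive doubling structure of $G$ yields the full theorem. To apply the result in our setting we take $S = O(n \plog n)$ (the working space of $\maintain$) and $R = O(n^2 \log n)$ (the random bits naively needed for the $O(\log n)$ uniform hash functions on ${n \choose 2}$ edges), obtaining a compressed seed of size $O(S \log R) = O(n \plog n)$; any hash evaluation during streaming is performed on the fly by expanding the relevant segment of the PRG output within the same space budget. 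Because the correctness proofs of $\maintain$ and $\refine$ only invoke events (sparse recovery via Lemma~\ref{sparse_recovery_primitive}, Chernoff concentration for $\|\bv{x}_e^{s(e)}\|_2^2$) that hold with probability $1 - 1/\poly(n)$ inside a space-$S$ computation, the pseudorandom substitution preserves all guarantees of Theorem~\ref{refinement}.
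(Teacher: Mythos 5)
The paper does not prove this statement; it is cited verbatim as Corollary~1 of Nisan's 1992 paper, and the only content the paper supplies is the surrounding discussion in Section~\ref{pseudorandomness} of how to \emph{apply} the result (by observing that the sketching algorithm can be analyzed as if the stream were sorted, so the working space against which Nisan's PRG must fool is $O(n\plog n)$ rather than $O(n^2\log n)$). You have instead attempted to reconstruct Nisan's proof from scratch, which is a genuinely different endeavor.

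Your reconstruction captures the correct construction (recursive doubling seeded by one $m$-bit block and $\log T$ pairwise-independent hash functions, giving seed length $m + O(S)\log T = O(S\log R)$) and the correct proof strategy (model the space-$S$ algorithm as a width-$2^{O(S)}$ read-once branching program and fool it via an $\ell_2$ lemma that exploits pairwise independence). However, the hybrid you define does not quite match the one-step lemma you then invoke. In your hybrid $H_i$, passing from $H_{i-1}$ to $H_i$ replaces blocks $[2^{i-1},2^i)$ from ``uniform'' to ``$G_{i-1}(h_i(x))$ conditioned on blocks $[0,2^{i-1})$ being $G_{i-1}(x)$,'' which is not a single application of the pairwise-independence lemma. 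Closing this gap requires an intermediate comparison to $G_{i-1}$ on an independent seed and an inductive bound on how well $G_{i-1}$ fools the sub-branching-program from every start state; in Nisan's actual accounting the error therefore accumulates roughly \emph{multiplicatively} over the recursion, giving a total distinguishing advantage on the order of $T\cdot 2^{-\Omega(m)}$ rather than $\log T\cdot 2^{-\Omega(m)}$. This still suffices because a space-$S$ machine uses $R\le 2^{O(S)}$ random bits, so $T = R/m \le 2^{O(S)}$, and one chooses $m=\Theta(S)$ with a large enough constant to dominate this; but your statement ``union-bound over the $O(\log R)$ levels'' is an oversimplification that would not survive a careful write-up. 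Since the paper itself treats the theorem as a black box, the cleanest fix is to do the same, leaning on the citation, and reserve the detailed argument for the application paragraph (sorted-stream reduction, commutativity of sketch updates, $O(n\plog n)$-space replay of the PRG), which your final paragraph does get right and which mirrors the paper's Section~\ref{pseudorandomness}.
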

\cite{nisan1992pseudorandom} gives this conversion explicitly by describing a method for generating $R$ pseudorandom bits from $O(S\log R)$ truly random bits. For any algorithm running in $space(S)$, the pseudorandom bits are ``good enough'' in that the output distribution of the algorithm under pseudorandom bits is very close to the output distribution under truly random bits. In particular, the total variation distance between the distributions is at worst $2^{-O(S)}$ (see Lemma 3 in \cite{nisan1992pseudorandom}). It follows that using pseudorandom bits increases the failure probability of any randomized algorithm by just $2^{-O(S)}$ in the worst case.

As described, our algorithm runs in $O(n^2 \log n)$ space and it is not immediately obvious how to use Theorem \ref{pseudorandomPrimitive} to reduce this requirement.
However, consider the following: suppose our algorithm is used on a sorted edge stream where all insertions and deletions for a single edge come in consecutively. In this case, at any given time, we only need to store one random bit for each hash function, which requires just $O(\log n)$ space. The random bits can be discarded after moving on to the next edge. Thus, the entire algorithm can run in $O(n \plog(n))$ space. Then, we can apply Theorem \ref{pseudorandomPrimitive}, using the pseudorandom generator to get all of our required random bits by expanding just $S\log R = O(n \plog(n)) \cdot O(\log(n^2 \log n)) =  O(n \plog(n))$ truly random bits. 
Since our failure probability increases by at most $1/2^{O(n\plog n)}$, we still only fail with probability inverse polynomial in $n$. 

Now notice that, since our algorithm is sketch based, edge updates simply require an addition to or subtraction from a sketch matrix. These operations commute, so our output will not differ if we reorder of the insertion/deletion stream. Thus, we can run our algorithm on a general edge stream, using the pseudorandom number generator to generate any of the required $O(n^2 \log n)$ bits as they are needed and operating in only $O(n \plog n)$ space.

Each time an edge is streamed in, we need to generate $\log n$ random bits from the pseudorandom generator. This can be done in $\log(R)*S = O(n\plog(n))$ time \cite{indyk2000stable}, which dominates the runtime required to process each streaming update.

Finally, Section \ref{structured} uses a slightly different sampling scheme for general structured matrices. Instead of building a sequence of subsampled matrices, the row dictionary is sampled independently at each level. In total, the required number of random bits is $O(m\log^2n)$, where $m$ is the number of rows in the dictionary $\bv{A}$. We require that $m = \poly(n)$, in which case the arguments above apply unmodified for the general matrix case.




\section{Acknowledgements}
We would like to thank Richard Peng for pointing us to the recursive row sampling algorithm contained in \cite{pengV1}, which became a critical component of our streaming algorithm. 
We would also like to thank Jonathan Kelner for useful discussions and Jelani Nelson for a helpful initial conversation on oblivious graph compression.

This work was partially supported by NSF awards 0843915, 1111109, and 0835652, CCF-1065125, CCF-AF-0937274,  CCF-0939370,  and CCF-1217506, NSF Graduate Research Fellowship grant 1122374, Hong Kong RGC grant 2150701, AFOSR grants FA9550-13-1-0042 and FA9550-12-1-0411, MADALGO center, Simons Foundation, and the Defense Advanced Research Projects Agency (DARPA).

\bibliography{streaming_main}{}

\newcommand{\etalchar}[1]{$^{#1}$}
\begin{thebibliography}{FKM{\etalchar{+}}05}

\bibitem[AG09]{ahnStreamingSparsification}
Kook~Jin Ahn and Sudipto Guha.
\newblock Graph sparsification in the semi-streaming model.
\newblock In {\em \ICALP{2009}}, pages 328--338, 2009.

\bibitem[AGM12a]{linearMeasurement}
Kook~Jin Ahn, Sudipto Guha, and Andrew McGregor.
\newblock Analyzing graph structure via linear measurements.
\newblock In {\em \SODA{2012}}, pages 459--467, 2012.

\bibitem[AGM12b]{gssss}
Kook~Jin Ahn, Sudipto Guha, and Andrew McGregor.
\newblock Graph sketches: sparsification, spanners, and subgraphs.
\newblock In {\em \PODS{2012}}, pages 5--14, 2012.

\bibitem[AGM13]{ahn2013spectral}
Kook~Jin Ahn, Sudipto Guha, and Andrew McGregor.
\newblock Spectral sparsification in dynamic graph streams.
\newblock In {\em \APPROX{2013}}, pages 1--10, 2013.

\bibitem[BK96]{benczur1996approximating}
Andr{\'a}s Bencz{\'u}r and David Karger.
\newblock Approximating s-t minimum cuts in $\tilde{O}(n^2)$ time.
\newblock In {\em \STOC{1996}}, pages 47--55, 1996.

\bibitem[CLM{\etalchar{+}}15]{uniformSampling}
Michael~B. Cohen, Yin~Tat Lee, Cameron Musco, Christopher Musco, Richard Peng,
  and Aaron Sidford.
\newblock Uniform sampling for matrix approximation.
\newblock In {\em \ITCS{2015}}, 2015.

\bibitem[CW09]{clarkson2009numerical}
Kenneth Clarkson and David Woodruff.
\newblock Numerical linear algebra in the streaming model.
\newblock In {\em \STOC{2009}}, pages 205--214, 2009.

\bibitem[CW13]{clarkson2013low}
Kenneth~L. Clarkson and David~P. Woodruff.
\newblock Low rank approximation and regression in input sparsity time.
\newblock In {\em \STOC{2013}}, pages 81--90, 2013.

\bibitem[Elk11]{elkin2011streaming}
Michael Elkin.
\newblock Streaming and fully dynamic centralized algorithms for constructing
  and maintaining sparse spanners.
\newblock {\em ACM Transactions on Algorithms}, 7(2):20, 2011.
\newblock \pICALP{2007}.

\bibitem[ELMS11]{epstein2011improved}
Leah Epstein, Asaf Levin, Juli{\'a}n Mestre, and Danny Segev.
\newblock Improved approximation guarantees for weighted matching in the
  semi-streaming model.
\newblock {\em SIAM Journal on Discrete Mathematics}, 25(3):1251--1265, 2011.
\newblock \pSTACS{2010}.

\bibitem[FKM{\etalchar{+}}05]{feigenbaum2005graph}
Joan Feigenbaum, Sampath Kannan, Andrew McGregor, Siddharth Suri, and Jian
  Zhang.
\newblock On graph problems in a semi-streaming model.
\newblock {\em Theoretical Computer Science}, 348(2):207--216, 2005.

\bibitem[GI10]{gilbert2010sparse}
Anna~C. Gilbert and Piotr Indyk.
\newblock Sparse recovery using sparse matrices.
\newblock {\em Proceedings of the IEEE}, 98(6):937--947, 2010.

\bibitem[GKP12]{goel2012single}
Ashish Goel, Michael Kapralov, and Ian Post.
\newblock Single pass sparsification in the streaming model with edge
  deletions.
\newblock {\em \CoRR}, abs/1203.4900, 2012.

\bibitem[GLPS12]{gilbert-stoc2010}
A.~Gilbert, Y.~Li, E.~Porat, and M.~Strauss.
\newblock Approximate sparse recovery: Optimizing time and measurements.
\newblock {\em SIAM Journal on Computing}, 41(2):436--453, 2012.
\newblock \pSTOC{2010}.

\bibitem[Hoe63]{Hoeffding:1963}
Wassily Hoeffding.
\newblock Probability inequalities for sums of bounded random variables.
\newblock {\em Journal of the American Statistical Association},
  58(301):13--30, 1963.

\bibitem[HRR99]{henz:lb}
Monika~R. Henzinger, Prabhakar Raghavan, and Sridhar Rajagopalan.
\newblock External memory algorithms.
\newblock chapter Computing on Data Streams, pages 107--118. American
  Mathematical Society, Boston, MA, USA, 1999.

\bibitem[Ind06]{indyk2000stable}
Piotr Indyk.
\newblock Stable distributions, pseudorandom generators, embeddings, and data
  stream computation.
\newblock volume~53, pages 307--323, 2006.
\newblock \pFOCS{2000}.

\bibitem[KL13]{kelner2011spectral}
Jonathan~A Kelner and Alex Levin.
\newblock Spectral sparsification in the semi-streaming setting.
\newblock {\em Theory of Computing Systems}, 53(2):243--262, 2013.
\newblock \pSTACS{2011}.

\bibitem[KMP11]{koutis2011nearly}
Ioannis Koutis, Gary~L. Miller, and Richard Peng.
\newblock A nearly-m log n time solver for sdd linear systems.
\newblock In {\em \FOCS{2011}}, pages 590--598, 2011.

\bibitem[KP12]{KP12}
Michael Kapralov and Rina Panigrahy.
\newblock Spectral sparsification via random spanners.
\newblock In {\em \ITCS{2012}}, pages 393--398, 2012.

\bibitem[KW14]{KW14}
Michael Kapralov and David Woodruff.
\newblock Spanners and sparsifiers in dynamic streams.
\newblock In {\em \PODC{2014}}, pages 107--118, 2014.

\bibitem[LMP13]{pengV2}
Mu~Li, Gary~L. Miller, and Richard Peng.
\newblock Iterative row sampling.
\newblock In {\em \FOCS{2013}}, pages 127--136, 2013.

\bibitem[McG14]{gregorSurvey}
Andrew McGregor.
\newblock Graph stream algorithms: A survey.
\newblock {\em SIGMOD Rec.}, 43(1):9--20, 2014.

\bibitem[MM13]{meng2013}
Michael~W Mahoney and Xiangrui Meng.
\newblock Low-distortion subspace embeddings in input-sparsity time and
  applications to robust linear regression.
\newblock In {\em \STOC{2013}}, pages 91--100, 2013.

\bibitem[MP12]{pengV1}
Gary~L. Miller and Richard Peng.
\newblock Iterative approaches to row sampling.
\newblock {\em \CoRR}, abs/1211.2713v1, 2012.

\bibitem[Mut05]{muthukrishnan2005data}
S.~Muthukrishnan.
\newblock Data streams: Algorithms and applications.
\newblock {\em Foundations and Trends in Theoretical Computer Science},
  1(2):1--136, 2005.

\bibitem[Nis92]{nisan1992pseudorandom}
Noam Nisan.
\newblock Pseudorandom generators for space-bounded computation.
\newblock {\em Combinatorica}, 12(4):449--461, 1992.
\newblock \pSTOC{1990}.

\bibitem[NN13]{osnap}
Jelani Nelson and Huy~L. Nguyen.
\newblock {OSNAP}: Faster numerical linear algebra algorithms via sparser
  subspace embeddings.
\newblock In {\em \FOCS{2013}}, pages 117--126, 2013.

\bibitem[Sar06]{sarlos2006improved}
Tamas Sarlos.
\newblock Improved approximation algorithms for large matrices via random
  projections.
\newblock In {\em \FOCS{2006}}, pages 143--152, 2006.

\bibitem[SS11]{graphSparsificationEffectiveResistance}
Daniel~A. Spielman and Nikhil Srivastava.
\newblock Graph sparsification by effective resistances.
\newblock {\em SIAM Journal on Computing}, 40(6):1913--1926, 2011.
\newblock \pSTOC{2008}.

\bibitem[ST11]{spielmanTengSpectralSparse}
Daniel~A. Spielman and Shang-Hua Teng.
\newblock Spectral sparsification of graphs.
\newblock {\em SIAM Journal on Computing}, 40(4):981--1025, 2011.
\newblock \pSTOC{2004}.

\bibitem[ST14]{spielmanTengSolver}
Daniel~A. Spielman and Shang-Hua Teng.
\newblock Nearly linear time algorithms for preconditioning and solving
  symmetric, diagonally dominant linear systems.
\newblock {\em SIAM Journal on Matrix Analysis and Applications},
  35(3):835--885, 2014.
\newblock \pSTOC{2004}.

\bibitem[Tro12]{tropp2012user}
Joel~A. Tropp.
\newblock User-friendly tail bounds for sums of random matrices.
\newblock {\em Foundations of Computational Mathematics}, 12(4):389--434, 2012.

\end{thebibliography}
\bibliographystyle{alpha}

\appendix

\section{Sparse Recovery}\label{sparse_recovery_appendix}
In this section we give a proof of the $\ell_2$ heavy hitters algorithm given in Lemma \ref{sparse_recovery_primitive}. It is known that $\ell_2$ heavy hitters is equivalent to the $\ell_2/\ell_2$ sparse recovery problem \cite{gilbert2010sparse}. Some sparse recovery algorithms are in fact based on algorithms for solving heavy hitters problem. However, we were not able to find a suitable reference for an $\ell_2$ heavy hitters algorithm so we show the reduction here - namely, how to find $\ell_2$ heavy hitters using a sparse recovery algorithm. 

We follow the terminology of~\cite{gilbert-stoc2010}. An approximate sparse recovery system consists of parameters $k, N$, an $m \times N$ measurement matrix $\bs{\Phi}$, and a decoding algorithm $D$. For any vector $\bv{x}\in \R^N$ the decoding algorithm $D$ can be used to recover an approximation $\wh{\bv{x}}$ to $\bv{x}$ from the {\em linear sketch} $\bv{\Phi}\bv{x}$. In this paper  we will use a sparse recovery algorithm that achieves the $\ell_2/\ell_2$ sparse recovery guarantee:
$$
||\wh{\bv{x}}-\bv{x}||_2\leq C\cdot ||\bv{x}-\bv{x}_k||_2
$$where $\bv{x}_k$ is the best $k$-term approximation to $\bv{x}$ and $C>1$.  Our main sparse recovery primitive is the following result of~\cite{gilbert-stoc2010}:
\begin{theorem}[Theorem~1 in~\cite{gilbert-stoc2010}]\label{thm:l2l2}
For each $k\geq 1$ and $\epsilon>0$, there is an algorithm and a distribution $\bs{\Phi}$ over matrices in $\R^{O(k\log (N/k) /\epsilon) \times N}$ satisfying that for any $\bv{x}\in \R^N$, given $\bs{\Phi}\bv{x}$, the algorithm returns $\wh{\bv{x}}$ such that $\wh{\bv{x}}$ has $O(k\log^{O(1)} N/\epsilon)$ non-zeros and 
$$
||\wh{\bv{x}}-\bv{x}||_2^2\leq (1+\epsilon) ||\bv{x}-\bv{x}_k||_2^2
$$
with probability at least $3/4$.
The decoding algorithm runs in time $O(k\log^{O(1)} N/\epsilon)$.
\end{theorem}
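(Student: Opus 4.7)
The plan is a two-layer construction --- an \emph{identification} layer that locates heavy coordinates and an \emph{estimation} layer that estimates their values --- followed by iteration against residuals with a telescoping measurement budget. As a first step I would build a weak $\ell_2/\ell_2$ primitive that achieves $\|\wh{x} - x\|_2 \le C\|x - x_k\|_2$ for some absolute constant $C > 1$ using $O(k \log(N/k))$ measurements. The estimation layer would be a CountSketch-style hashing with $\Theta(k)$ buckets repeated $\Theta(\log(N/k))$ times; a median-of-estimates gives additive per-coordinate error $O(\|x - x_k\|_2/\sqrt{k})$ with high probability, so keeping the top-$k$ estimated coordinates satisfies the weak guarantee. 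To actually \emph{identify} heavy coordinates rather than merely estimate a known index, I would pair each bucket with $O(\log(N/k))$ bit-test measurements encoding the binary representation of any sufficiently heavy coordinate landing alone in the bucket; this adds $O(k \log(N/k))$ further rows and decodes in $O(k \log^{O(1)} N)$ time per round.

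Second, I would apply this primitive iteratively on residuals. By linearity, the sketch of $x - \hat x^{(t)}$ can be formed from the sketch of $x$ and the running estimate $\hat x^{(t)}$. Initializing $\hat x^{(0)} = 0$ and setting $\hat x^{(t+1)} = \hat x^{(t)} + w^{(t)}$, where $w^{(t)}$ is the output of the weak primitive on the residual sketch, the tail error contracts geometrically, so $O(\log(1/\epsilon))$ rounds suffice to bring $\|x - \hat x\|_2^2$ within a $(1+\epsilon)$ factor of $\|x - x_k\|_2^2$. Used naively this spends $O(k\log(N/k)\log(1/\epsilon))$ measurements, a $\log(1/\epsilon)/\epsilon$ factor off target. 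To reach $O(k\log(N/k)/\epsilon)$, I would run round $t$ with geometrically shrinking sparsity $k_t$ and per-round error target $\epsilon_t$, chosen so that $\sum_t k_t \log(N/k_t)$ forms a geometric sum summing to $O(k\log(N/k)/\epsilon)$, exploiting that after $t$ rounds only an $\epsilon_t$-fraction of the tail mass remains.

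The main obstacle will be this telescoping analysis. Early rounds must capture nearly all of the heavy mass on only $O(k \log(N/k))$ measurements, while later rounds extract finer corrections with exponentially fewer rows, and a single inductive invariant must jointly control both the $\ell_2$ residual and the effective sparsity after each round. The union bound over $O(\log(1/\epsilon))$ rounds will also force each weak-primitive instance to succeed with probability $1 - 1/N^{\Omega(1)}$ rather than the nominal $3/4$; amplifying the constant-probability primitive to high probability without inflating the measurement count --- for instance, via independent repetitions combined with a bucket-wise majority/median scheme in both the estimation and identification layers --- is the other technical point on which the whole argument rests.
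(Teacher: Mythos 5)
The paper does not prove this statement: Theorem \ref{thm:l2l2} is quoted verbatim from \cite{gilbert-stoc2010} and used as an external black box in the proof of Lemma \ref{sparse_recovery_primitive}. What you have written is therefore a from-scratch reconstruction of the Gilbert--Li--Porat--Strauss argument, not a proof the paper contains or needs.

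As a reconstruction, your two-stage architecture---a constant-factor weak system via hashing plus identification, then iteration on residuals with a telescoping budget---matches the shape of their construction, but the points you set aside as ``obstacles'' are essentially the whole theorem. Two issues deserve flagging. First, the identification layer you sketch encodes a full coordinate index with $O(\log N)$ bit-tests per bucket, for $O(k\log N)$ rows per round; GLPS reach $O(k\log(N/k))$ by using expander-based, list-recoverable identification rather than naive bit-testing, and that refinement is what actually delivers the stated measurement count. Second, and more centrally, the coupled schedule of $(k_t,\epsilon_t)$---chosen so that the residual energy contracts to within a $(1+\epsilon)$ factor, the residual remains effectively sparse enough that shrinking $k_t$ loses no heavy mass, and the per-round row counts geometrically sum to $O(k\log(N/k)/\epsilon)$---is exactly the inductive invariant at the heart of their Theorem~1. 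Your proposal correctly identifies it as the crux but leaves it entirely unverified; without pinning it down one cannot check that the rounds compose, in particular that the constant $C$ from your weak primitive can be absorbed round over round without inflating the budget by a $\log(1/\epsilon)$ factor. A smaller point: the theorem asks only for success probability $3/4$, so the amplification to $1-1/N^{\Omega(1)}$ you anticipate is not forced by the statement itself; that amplification appears only later, in the paper's Lemma \ref{sparse_recovery_primitive}, via $O(\log N)$ independent repetitions and a median, while GLPS bound aggregate error across rounds rather than union-bounding round by round.
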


Using this primitive, we can prove Lemma \ref{sparse_recovery_appendix}.

\begin{replemma}{sparse_recovery_primitive}[$\ell_2$ Heavy Hitters]
For any $\eta >0$, there is a decoding algorithm $D$ and a distribution on matrices $\bv{\Phi}$ in $\R^{O(\eta^{-2} \plog( N)) \times N}$ such that, for any $\bv{x}\in \R^N$, given $\bs{\Phi} \bv{x}$, the algorithm $D$ returns a vector $\bv{w}$ such that
$\bv{w}$ has $O(\eta^{-2}\plog( N))$ non-zeros and satisfies
$$ ||\bv{x}-\bv{w}||_\infty \leq \eta ||\bv{x}||_2.$$
with probability $1-N^{-c}$ over the choice of $\bs{\Phi}$. The sketch $\bv{\Phi} \bv{x}$ can be maintained and decoded in $O(\eta^{-2}\plog( N))$ space. 
\end{replemma}

\begin{proof}
Let $h:[N]\to [16/\eta^2]$ be a random hash function (pairwise independence suffices), and  for $j=1,\ldots, 16/\eta^2$ let $\bv{y}^j_i=\bv{x}_i$ if $h(i)=j$ and $0$ o.w. 
For a vector $\bv{u}\in \R^N$ we write $\bv{u}_{-i}$ to denote $\bv{u}$ with the $i$-th component zeroed out. 

By Markov's inequality we have
$$
\Pr[||\bv{y}^{h(i)}_{-i}||^2>\eta^2 \norm{\bv{x}_{-i}}^2/2]<1/8.
$$
Note that since we are only using Markov's inequality, it is sufficient to have $h$ be pairwise independent. Such a function $h$ can be represented in small space.
Now invoke the result of Theorem~\ref{thm:l2l2} on $\bv{y}^{h(i)}$ with $k=1$, $\epsilon=1$, and let $\bv{w}^{h(i)}$ be the output. We have 
$$
||\bv{y}^{h(i)}-\bv{w}^{h(i)}||_2^2
\leq 2||\bv{y}^{h(i)}-\bv{y}^{h(i)}_k||_2^2
\leq 2||\bv{y}^{h(i)}_{-i}||^2.
$$
Hence, we have
$$ (\bv{y}^{h(i)}_i - \bv{w}^{h(i)}_i)^2 \leq \eta^2 \norm{\bv{x}}^2.$$

This shows that applying sketches from Theorem~\ref{thm:l2l2} to vectors $\bv{y}^{j}$, for $j=1,\ldots, 16/\eta^2$ and outputting the vector $\bv{w}$ with $\bv{w}_i = \bv{w}^{h(i)}_i$ allows us to recover all $i\in [N]$ with $\eta \norm{\bv{x}}_2$ additive error with probability at least $3/4-1/8$.

Performing $O(\log N)$ repetitions and taking the median value of $\bv{w}_i$ yields the result. 
Note that our scheme uses $O(\eta^{-2} \plog(N))$ space and decoding time, and is linear in $\bv{x}$, as desired.
\end{proof}

\section{Recursive Sparsification}\label{miller_peng_appendix}
For completeness, we give a short proof of Theorem \ref{miller_peng_chain}:
\begin{reptheorem}{miller_peng_chain}[Recursive Sparsification -- \cite{pengV1}, Section 4]
Consider any PSD matrix $\bv{K}$ with maximum eigenvalue bounded from above by $\lambda_{u}$ and minimum nonzero eigenvalue bounded from below by $\lambda_{l}$. Let $d = \lceil \log_2 (\lambda_{u}/\lambda_{l})\rceil$.  For $\ell \in \{0,1, 2, ... , d\}$, define:
\begin{align*}
\gamma(\ell) = \lambda_{u}/2^\ell
\end{align*}
So, $\gamma(d) \leq \lambda_{l}$ and $\gamma(0) = \lambda_u$. Then the chain of PSD matrices, $\begin{bmatrix}\bv{K}(0), \bv{K}(1), \ldots, \bv{K}(d)\end{bmatrix}$ with:
\begin{align*}
\bv{K}(\ell) = \bv{K} + \gamma(\ell)\bv{I}_{n\times n}
\end{align*}
satisfies the following relations:
\begin{enumerate}
  \item $\bv{K} \preceq_r \bv{K}(d) \preceq_r 2\bv{K}$
  \item $\bv{K}(\ell) \preceq \bv{K}(\ell-1) \preceq 2\bv{K}(\ell)$ for all $\ell \in \{1,\ldots, d\}$
  \item $\bv{K}(0) \preceq 2\gamma(0)\bv{I} \preceq 2\bv{K}(0)$
\end{enumerate}
When $\bv{K}$ is the Laplacian of an unweighted graph, $\lambda_{max} < 2n$ and $\lambda_{min} > 8/n^2$ (where here $\lambda_{min}$ is the smallest \emph{nonzero} eigenvalue). Thus the length of our chain, $d = \lceil \log_2 \lambda_{u}/\lambda_{l}\rceil$, is $O(\log n)$.
\end{reptheorem}

\begin{proof}
Relation 1 follows trivially from the fact that $\gamma(d) \leq \lambda_{l}$ is smaller than the smallest nonzero eigenvalue of $\bv{K}$. For any $\bv{x} \perp \ker(\bv{K})$:
\begin{align*}
\bv{x}^\top \bv{K}(d) \bv{x} = \bv{x}^\top \bv{K} \bv{x} + \bv{x}^\top(\gamma(d)\bv{I})\bv{x} \leq \bv{x}^\top \bv{K} \bv{x} + \bv{x}^\top(\lambda_{min}\bv{I})\bv{x} \leq 2 \bv{x}^\top \bv{K} \bv{x}
\end{align*}  
The other direction follows from $\gamma(d)\bv{I} \succeq 0$. Using the same argument, relation 3 follows from the fact that $\gamma(0) \geq \lambda_{max}(\bv{K})$. For relation 2:
\begin{align*}
2\bv{K}(\ell) &= 2\bv{K} + 2\gamma(\ell)\bv{I} = 2\bv{K} + \gamma(\ell-1)\bv{I} \succeq \bv{K}(\ell-1)
\end{align*}
Again, the other direction just follows from  $\gamma(\ell)\bv{I} \succeq 0$. 

Finally, we need to prove the required eigenvalue bounds. For an unweighted graph, $\lambda_{max} < n$ follows from fact that $n$ is the maximum eigenvalue of the Laplacian of the complete graph on $n$ vertices. $\lambda_{min} > 8/n^2$ by Lemma 6.1 of \cite{spielmanTengSolver}.
Note that this argument extends to weighted graphs when the ratio between the heaviest and lightest edge is bounded by a polynomial in $n$.
\end{proof}

\end{document}